\newcommand*\patchAmsMathEnvironmentForLineno[1]{%
  \expandafter\let\csname old#1\expandafter\endcsname\csname #1\endcsname
  \expandafter\let\csname oldend#1\expandafter\endcsname\csname end#1\endcsname
  \renewenvironment{#1}%
     {\linenomath\csname old#1\endcsname}%
     {\csname oldend#1\endcsname\endlinenomath}}%
\newcommand*\patchBothAmsMathEnvironmentsForLineno[1]{%
  \patchAmsMathEnvironmentForLineno{#1}%
  \patchAmsMathEnvironmentForLineno{#1*}}%
\newcommand{\namerefset}[1]{%
  \edef\@currentlabelname{#1}%
}
\newtheorem{theorem}{Theorem}
\begin{document}

\title{The Constant Geometric Speed Schedule for Adiabatic State Preparation}

\author{Mancheon Han}
\email{mchan@kias.re.kr}
\affiliation{School of Computational Sciences, Korea Institute for Advanced Study (KIAS), Seoul, 02455, Korea}
\author{Hyowon Park}
\affiliation{Materials Science Division, Argonne National Laboratory, Argonne, IL, 60439, USA}
\affiliation{Department of Physics, University of Illinois at Chicago, Chicago, IL, 60607, USA}
\author{Sangkook Choi}
\email{sangkookchoi@kias.re.kr}
\affiliation{School of Computational Sciences, Korea Institute for Advanced Study (KIAS), Seoul, 02455, Korea}
\date{\today}

\begin{abstract}
  The efficiency of adiabatic quantum evolution is governed by the evolution time $T$, which typically scales as $\mathcal{O}(\Delta^{-2})$ with the minimum energy gap $\Delta$. However, the rigorous lower bound is $\mathcal{O}(L\Delta^{-1})$, where $L$ is the adiabatic path length. Although $L$ is formally upper-bounded by $\mathcal{O}(\Delta^{-1})$, such a bound is often too loose in practice, and $L$ can be bounded independently of $\Delta$. This indicates the potential for a quadratic speedup through adiabatic schedule construction. Here, we introduce the constant geometric speed (CGS) schedule, which traverses the adiabatic path at a uniform rate. We show that this approach reduces the scaling of the evolution time by a factor of $\Delta^{-1}$, provided $L$ remains bounded independently of $\Delta$. We propose a segmented CGS protocol where path segment lengths are computed from eigenstate overlaps on the fly, reducing the prior spectral-knowledge requirement from the full gap function $\Delta(s)$ to just a global lower bound on the energy gap. Numerical tests on adiabatic unstructured search, N$_2$, and a [2Fe-2S] cluster demonstrate the optimal $\Delta^{-1}$ scaling, confirming a quadratic speedup over the standard linear schedule.
\end{abstract}

\maketitle

\section{Introduction}
The adiabatic theorem~\cite{BornFock1928, Kato1950, Jansen2007} states that
a quantum system remains in its instantaneous eigenstate as long as the state
of interest is separated by an energy gap from other eigenstates and the Hamiltonian changes
sufficiently slowly. This principle underlies both analog
adiabatic quantum computation~\cite{Farhi2000,Georgescu2014,Albash2018} and
its digital counterpart, adiabatic state preparation (ASP)~\cite{Aspuru-Guzik2005,Veis2014,Keever2024}, which implements
adiabatic evolution through quantum circuits~\cite{NielsenAndChuang2010}
and is widely used as a state-preparation subroutine for algorithms such as
quantum phase estimation~\cite{Kitaev1995,Abrams1999}. In both cases, the key
resource is the \emph{adiabatic evolution time} $T$, defined here as the minimum duration
required for the evolved state $\ket{\psi(T)}$ to achieve a target fidelity
$\mathcal F = |\braket{\psi(T)|\Phi_f}|^2$ with the desired eigenstate
$\ket{\Phi_f}$.

Among various parameters that affect the required adiabatic evolution time \(T\),
the minimal energy gap \(\Delta\) plays a major role. Both lower and upper bounds for \(T\) in terms of \(\Delta\) are established~\cite{Albash2018,ChenPRR2023}. The adiabatic evolution
starts with the eigenstate \(\ket{\Phi_i}\) of the easily solvable
Hamiltonian \(H_i\), and aims to obtain the eigenstate \(\ket{\Phi_f}\) of \(H_f\),
the target Hamiltonian. The rigorous adiabatic theorem~\cite{Jansen2007}
applies to any variation of the Hamiltonian between \(H_i\) and \(H_f\)
as long as \(H(t)\), the Hamiltonian as a function of time \(t\),
is a function of the renormalized time \(\tau = t/T\)
that starts with \(H_i\) and ends at \(H_f\).
Here we narrow down our discussion to the following form ,
\begin{align}
  H(s(\tau)) = H_i  + (H_f - H_i) s(\tau), \label{eq:Hs}
\end{align}
The parameter \(s\) is a monotonically
increasing function of the normalized time \(\tau\), referred to as the schedule. The schedule \(s(\tau)\) maps the normalized time \(\tau \in [0,1]\) to the interpolation parameter \(s \in [0,1]\), satisfying \(s(0) = 0\) and \(s(1) = 1\).

For the given adiabatic path \(\{\ket{\Phi(s)}\}_{s\in[0,1]}\) of instantaneous eigenstates \(\ket{\Phi(s)}\) of \(H(s)\), the lower bound for \(T\) is given by ~\cite{Boixo2010}
\begin{align}
  T > \mathcal{O}(L/\Delta)
  \label{eq:T_lowerbound}
\end{align}
where \( L = \int_0^1 \left\| \partial_s\ket{\Phi(s)} \right\| ds \) is
the adiabatic path length. As \(L\) is often can be bounded independently
of \(\Delta\)~\cite{Boixo2010}, the required adiabatic evolution time \(T\)
scales at least as \(\mathcal{O}(\Delta^{-1})\).
On the other hand, a sufficient condition for adiabaticity~\cite{Jansen2007,Albash2018}
yields an upper bound
\begin{align}
T < O\!\left(\frac{\|\dot H\|}{\Delta^2}\right)
+ \mathcal O\!\left(\frac{\|\ddot H\|}{\Delta^2}\right)
+ \mathcal O\!\left(\frac{\|\dot H\|^2}{\Delta^3}\right),
\label{eq:T_wo_schedule}
\end{align}
where \(\dot{\phantom{v}}\) denotes a derivative with respect to \(\tau\). This condition implies that \(T\) scales at most as
\(\mathcal{O}(\Delta^{-2})\) or even \(\mathcal{O}(\Delta^{-3})\).
These different scaling behaviors between upper and lower bound highlights the potential role of scheduling, i.e., the choice of \(s(\tau)\).
For example, in the adiabatic Grover search, a linear schedule gives
$T=\mathcal{O}(\Delta^{-2})$, whereas an optimized schedule using a priori
gap information achieves the optimal $\mathcal{O}(\Delta^{-1})$ scaling~\cite{Roland2002}. 
Since then, a number of optimal scheduling strategies have been 
proposed~\cite{Rezakhani2009,Rezakhani2010,Isermann2021,Matsuura2021,Shingu2025,Jarret2019,Braida2025,Boixo2009,ChenPRR2022}.
However, they require spectral information or lack general performance guarantees,
which limits their practical applicability.

In this work we introduce a geometric framework for constructing schedules that
improve the gap scaling.  Motivated by the geometric origin of the optimal
$\mathcal{O}(\Delta^{-1})$ scaling, we propose the \emph{constant-speed
schedule}, which traverses the adiabatic path at a uniform rate and improves
the scaling of the upper bound of \(T\) in \(1/\Delta\) by one order,
provided that \(L\) remains bounded independently of \(\Delta\).
We then propose a practical classical-quantum hybrid algorithm to
implement this schedule using eigenstate overlaps computed along the evolution.
We demonstrate the method on the adiabatic
Grover algorithm, N$_2$ molecule, and a [2Fe-2S] cluster,
showing that the constant geometric speed schedule achieves the optimal
$\mathcal{O}(\Delta^{-1})$ scaling, a quadratic speedup over
$\mathcal{O}(\Delta^{-2})$ scaling of the linear schedule.
In contrast to earlier approaches, our method reduces the prior
spectral-knowledge requirement from the full gap function $\Delta(s)$ to
just a global lower bound on the energy gap, and provably improves the gap
dependence of the adiabatic evolution.

\section{Notation}
Before presenting the methods, we briefly summarize the notation.  
The Hamiltonian along the adiabatic path is denoted by $H(s)$, with 
$H_i$ and $H_f$ the endpoints and $\ket{\Phi_i}$, $\ket{\Phi_f}$ their 
respective eigenstates.  
The state at physical time $t$, $\ket{\psi(t)}$, evolves from the initial 
state $\ket{\Phi_i}$ according to the Schrödinger equation with $H(s(\tau))$.  
When no confusion arises, we write $H(\tau)\equiv H(s(\tau))$ and 
$\ket{\Phi(\tau)}$ for the instantaneous eigenstate $\ket{\Phi(s(\tau))}$.

\section{Adiabatic evolution error}
With $\mathcal F = |\!\braket{\psi(T)|\Phi_f}\!|^2$,
the adiabatic evolution error satisfies~\cite{Jansen2007}
\begin{align}
1-\mathcal F \le \left( \frac{\mathcal C[s]}{T} \right)^2 ,
\end{align}
where the functional $\mathcal C[s]$ depends on the schedule $s(\tau)$ and is
given by
\begin{align}
\mathcal C[s] &= \sum_{\tau\in\{0,1\}} \frac{\|\dot P\|}{\Delta(\tau)}
 + \int_0^1 \frac{\|Q\ddot P P\|}{\Delta(\tau)}\, d\tau \nonumber \\
&\quad + \int_0^1 \frac{\|\dot P\|^2}{\Delta(\tau)}\, d\tau
 + \int_0^1 2 \frac{\|\dot H\|\, \|\dot P\|}{\Delta^2(\tau)}\, d\tau ,
\label{eq:infidelity_bound}
\end{align}
with $P(\tau)=\ket{\Phi(\tau)}\!\bra{\Phi(\tau)}$ and $Q(\tau)=\mathbb I-P(\tau)$.
For clarity, the explicit $\tau$ dependence has been omitted. The evolution time $T$ required to reach a desired fidelity
is therefore proportional to $\mathcal C[s]$. Without exploiting the geometric structure of the eigenstate path, the terms in
Eq.~\eqref{eq:infidelity_bound} can be bounded using derivatives of the
Hamiltonian and the energy gap~\cite{Jansen2007} as
$\|\dot P\|\le \|\dot H\|/\Delta$ and
$\|Q\ddot P P\|\le \|\ddot H\|/\Delta + 4\|\dot H\|^2/\Delta^2$.
Substituting these estimates yields upper bounds in Eq.~(\ref{eq:T_wo_schedule}).

\section{Constant geometric speed schedule}
The lower bound of the adiabatic evolution time is determined by geometric
properties of the eigenstate path~\cite{Boixo2010}, suggesting that the
scaling with respect to the minimum gap can be improved by rewriting
Eq.~\eqref{eq:infidelity_bound} in geometric form.  The natural 
parameterization of the path is its Fubini--Study arc length~\cite{ONeill2006} 
$l(s)=\int_0^s \|\partial_{s'}\Phi(s')\| ds'$. Accordingly, the instantaneous speed is
$v(\tau)=dl/d\tau=\|\dot\Phi(\tau)\|$.  Since $\|\dot P\|=v(\tau)$, the first
term of $\mathcal C[s]$ directly reflects this speed.
The second term is associated with the
curvature $\kappa$ of the adiabatic path, which can be defined as~\cite{Alsing2024}
$\kappa(\tau)=\|Q\,d^2\Phi/dl^2\|$.  Using the curvature, we have 
$\|Q\ddot P P\|=\sqrt{\dot v^2+\kappa^2 v^4}$
(Proof in Sec.~\uppercase{i}\,A of the Supplemental Material (SM)~\cite{SM}).
Thus, $\mathcal{C}[s]$ in Eq.~\eqref{eq:infidelity_bound} can be written with
geometric terms as
\begin{align}
\mathcal{C}[s] &= \sum_{\tau \in \{0,1\}} \frac{v(\tau)}{\Delta(\tau)}
 + \int_0^1 \frac{\sqrt{\dot{v}^2(\tau) + \kappa^2(\tau) v^4(\tau)}}{\Delta(\tau)} d\tau
   \nonumber \\
&\quad + \int_0^1 \frac{v^2(\tau)}{\Delta(\tau)} d\tau
 + \int_0^1 2 \frac{\|\dot{H}(\tau)\| v(\tau)}{\Delta^2(\tau)} d\tau.
   \label{eq:infidelity_bound_with_speed}
\end{align}
For the constant geometric speed (CGS) schedule $s_c(\tau)$, the speed becomes uniform, $v(\tau)=L$ with $L=\int_0^1 v(\tau)d\tau$, and the acceleration term
$\dot v$ vanishes.  Substituting into
Eq.~\eqref{eq:infidelity_bound_with_speed} yields
\begin{align}
\mathcal{C}[s_c] =&L\sum_{\tau\in\{0,1\}}\frac{1}{\Delta(\tau)}+ L^2 \int_0^1
\frac{\kappa(\tau) + 1}{\Delta(\tau)} d\tau \nonumber \\
&+ 2L\int_0^1 \frac{\| \dot{H}(\tau) \|}{\Delta^2(\tau)} d\tau.
\end{align}
This leads to the following scaling for the total time $T$:
\begin{align}
    T = \mathcal{O}\!\left( \frac{L}{\Delta} \right)
      + \mathcal{O}\!\left(\frac{L(K+L)}{\Delta}\right)
      + \mathcal{O}\!\left(\frac{L \|\dot{H}\|}{\Delta^2}\right),
       \label{eq:T_css}
\end{align}

where \(K = \int_0^L \kappa(l) \, dl\) is the total curvature of the adiabatic path.
Comparing this with Eq.~\eqref{eq:T_wo_schedule}, every term in the bound improves by one order in $\Delta$.
Here, the explicit dependence on the gap is replaced by the geometric quantities $L$ and $K$.
If these geometric quantities exhibit worst-case scaling (e.g., $L \propto \Delta^{-1}$), the advantage of the CGS schedule vanishes.
However, in many cases of interest, $L$~\cite{Boixo2010} and $K$ can be bounded independently of $\Delta$ (see Figs.~\ref{fig:3}(c) and \ref{fig:4}(d)).
In this regime, the CGS schedule achieves a one-order improvement in gap scaling.
Consequently, the constant-speed schedule can yield the optimal $T=\mathcal{O}(\Delta^{-1})$ scaling in cases where generic schedules typically require $T=\mathcal{O}(\Delta^{-2})$, such as for Hamiltonians belonging to the Gevrey class~\cite{Elgart2012}.

The constant geometric speed condition ($\dot{v}=0$) implies $ds/d\tau \propto (dl/ds)^{-1}$.
Since the geometric speed is bounded by $dl/ds \leq \|H_f-H_i\|/\Delta$, this suggests a similarity to previous gap-based locally optimal schedules $ds/d\tau \propto \Delta^p$~\cite{Jansen2007, Roland2002, Albash2018}.
This similarity is supported by our numerical experiments (Fig.~S2 in the SM~\cite{SM}).
However, despite this similarity in performance, our method offers a distinct practical advantage.
Unlike the gap function $\Delta(s)$, which requires knowledge of excited states, the geometric speed is determined solely by the ground state.
Consequently, the CGS schedule can be computed on the fly during the adiabatic evolution, as the evolving state provides a good approximation of the true ground state.
The only parameter required is a global lower bound of the energy gap $\Delta$ to filter out excited state components.
This parameter is far more tractable than characterizing $\Delta(s)$ for all $s$, and can even be estimated adaptively.
A related geometric formulation by Chen~\cite{ChenPRR2022} likewise casts the
optimal protocol as constant-velocity traversal, but with respect to a
\emph{dynamical} quantum geometric tensor, in contrast to the standard
Fubini--Study metric used in our work. This choice of metric allows our
method to compute the adiabatic path length and the corresponding schedule
directly through eigenstate overlaps.


We now describe how to construct such a schedule in practice
based on the length of adiabatic path segments,
\begin{align}
\Delta l = \int_s^{s + \Delta s} \left\| \frac{d\ket{\Phi(s')}}{ds'} \right\| ds'.
\label{eq:delta_l}
\end{align}
The equation suggests that we can traverse the adiabatic path at a uniform rate 
by keeping the ratio $\Delta l / \Delta t$ constant.
The important thing to remark is that the segment length $\Delta l$ can be
obtained from the overlap between nearby
eigenstates, from $|\!\braket{\Phi(s) | \Phi(s + \Delta s)}\!|^2 = 1 - (\Delta l)^2$
up to the leading order.
Then, the following theorem provides a discretized approximation
to the constant geometric speed schedule (Proof in Sec.~I\,B of SM).
\begin{theorem}
\label{theorem:css}
Suppose a monotonic sequence \( s_0 = 0 < s_1 < \dots < s_m = 1 \) is given, and that the eigenstate overlaps 
\[
|\braket{\Phi(s_j) | \Phi(s_{j+1})}|^2 = 1 - (\Delta l_{j+1})^2
\]
are known for each adjacent pair \( (s_j, s_{j+1}) \). Then, any schedule
\( \hat{s}(\tau) \) satisfying \( \hat{s}(\tau_j) = s_j \) and  
\[
\tau_{j+1} = \tau_j + \frac{\Delta l_{j+1}}{\sum_{i=1}^m \Delta l_i}
\]
converges to the constant geometric speed schedule \( s_c \) in the limit \( \max_j (s_{j+1} - s_j) \rightarrow 0 \).
\end{theorem}

Based on Theorem~\ref{theorem:css}, we can construct a segmented constant geometric speed schedule, $\hat{s}_c(\tau)$, by interpolating the discrete points $\{(\tau_j, s_j)\}$.
This schedule serves as a practical, discretized realization of a true constant
speed schedule. The core challenge in this approach lies in the computation
of the path segment lengths $\Delta l_j$, which requires calculating the eigenstate overlaps
$|\braket{\Phi(s_j) | \Phi(s_{j+1})}|^2$. The practical algorithm for this task is
given in the Appendix.

\begin{figure} 
\centering
\centerline{\includegraphics[width=8.6cm]{./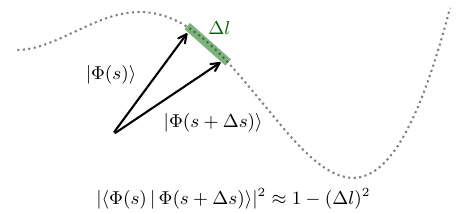}}
\caption{Schematic construction of the constant geometric speed (CGS) schedule. 
The segment length $\Delta l$ is obtained from the overlap 
$|\langle\Phi(s)|\Phi(s+\Delta s)\rangle|^2$, then $\Delta t$ is adjusted so that $\Delta l/\Delta t$ remains constant.}
\label{fig:1}
\end{figure}

\begin{figure*} 
\centering
\centerline{\includegraphics[width=17.2cm]{./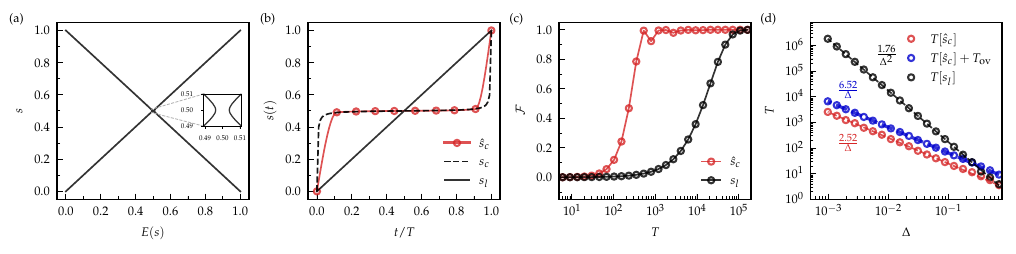}}
\caption{Application of the constant-speed schedule to the adiabatic Grover search.
Panels (a--c) show results for $N=2^{14}$.
(a) Energy spectrum of $H(s)$.
(b) Segmented constant-speed schedule $\hat{s}_c$ compared with the exact optimal schedule $s_c$ and the linear schedule $s_l$; circles denote computed points.
(c) Fidelity $\mathcal{F}$ versus evolution time $T$ for $\hat{s}_c$ and $s_l$.
(d) Evolution time required to reach $\mathcal{F}=0.75$ as a function of the minimum gap $\Delta$; circles show numerical data and dashed lines indicate fitted scalings.
}
\label{fig:2}
\end{figure*}

\section{Applications}
We apply our method to a variety of quantum systems, ranging from quantum search
problems to quantum chemistry. In the following analysis, we account for the overlap calculation
overhead, $T_{\mathrm{ov}}$, which we estimated as $4\Delta^{-1}$
(details of this estimate and other computational parameters
are provided in the Appendix).
For chemical systems, the B3LYP exchange-correlation functional~\cite{Becke1993,Lee1988}
is used in a density functional theory (DFT) calculation.

\subsection{Adiabatic Grover search}
The Grover search problem~\cite{Grover1997} is a canonical example for adiabatic scheduling.
Its adiabatic formulation~\cite{Roland2002} evolves the initial state $\ket{\Phi_i}=\ket{+}^{\otimes n}$ toward the marked computational basis state $\ket{m}$ using the Hamiltonians $H_i=\mathbb I-\ket{\Phi_i}\!\bra{\Phi_i}$ and $H_f=\mathbb I-\ket{m}\!\bra{m}$, interpolated as in Eq.~\eqref{eq:Hs}.
While a linear schedule $s(\tau)=\tau$ yields $T=\mathcal{O}(N)$, offering no advantage over classical search, the optimal schedule achieves a quadratic speedup.
However, realizing this optimal schedule traditionally relies on the condition $\dot{s}\propto\Delta^{2}(s)$~\cite{Roland2002}, which necessitates explicit prior knowledge of $\Delta(s)$.
Crucially, we prove that this optimal schedule is, in fact, identical to the CGS schedule (Proof in Sec.~I\,C of SM).
Consequently, our method naturally reproduces the optimal Grover schedule by computing local geometric information on-the-fly, without requiring \textit{a priori} knowledge of $\Delta(s)$.
%

Figure~\ref{fig:2} (a-c) shows the numerical results for the adiabatic Grover problem with $N=2^{14}=16384$.
Panel (a) displays the energy spectrum, which exhibits its minimum gap at $s=0.5$.
Consistent with this feature, our algorithm generates a schedule $\hat{s}_c(\tau)$ that slows down near the gap minimum, as shown in panel (b).
The computed schedule points (red) lie almost exactly on the theoretical constant-speed schedule $s_c(\tau)$, with deviations arising only from interpolation.
Panel (c) quantifies the resulting speedup: our schedule reaches $75\%$
fidelity approximately $92$ times faster than the linear schedule ($T=3.19\times10^2$ versus $T=2.94\times10^4$).

Finally, Fig.~\ref{fig:2}(d) confirms the asymptotic scaling of the evolution
time with the minimum gap on the adiabatic path \(\Delta\).
For the adiabatic Grover search, \(\Delta=1/\sqrt{N}\)~\cite{Roland2002},
so the plot also directly connects
evolution time with the number of items \(N\) in the database .
We plot the adiabatic evolution time $T$
and the total time $T_\text{tot} = T + T_\mathrm{ov}$. The numerical fit shows that the linear
schedule's evolution time scales as $T \propto \Delta^{-2} = \mathcal{O}(N)$,
while our constant geometric speed schedule achieves the optimal scaling of $T \propto
\Delta^{-1} = \mathcal{O}(\sqrt{N})$. Since the overlap calculation overhead $T_\mathrm{ov}$ also
scales as $\Delta^{-1}$, the total time $T_\text{tot}$ maintains this optimal
scaling, with a larger prefactor.

\subsection{N$_2$}
The first system we consider is the nitrogen molecule, described using a STO-3G basis set~\cite{Stewart1970} and restricted to the
singlet subspace. As the initial Hamiltonian $H_i$, we use the Kohn-Sham Hamiltonian from a converged DFT calculation. The system then
evolves towards the final Hamiltonian $H_f$, which is the full electronic Hamiltonian within this basis set. Figure~\ref{fig:3}(a) shows the
two lowest-lying energies as a function of bond length $R$, obtained from the DFT calculation corresponds to $H_i$ and from a Full Configuration
Interaction (FCI) calculation of $H_f$. The inset reveals that the energy gap of the initial Hamiltonian decreases rapidly with increasing $R$,
causing the overall minimum gap $\Delta$ of the adiabatic path to occur near $s=0$ for large $R$.

Our CGS schedule $\hat{s}_c$ allocates a significantly larger fraction of the evolution time to this region, resulting in a substantial reduction in the required evolution time, as shown for $R=3.5\,\text{\AA}$ in Fig.~\ref{fig:3}(b) (see SM~\cite{SM} Fig.~S1(a) for the shape of $\hat{s}_c$).
To be quantitative, let $T$ denote the time required to reach a target fidelity of $0.75$.
We find that our schedule yields a speedup of $T[s_l] / T[\hat{s}_c] \approx 52.6$ over the linear schedule $s_l$.
To investigate the origin of this speedup, in Fig.~\ref{fig:3}(c), we plot the adiabatic path length $L$ and total curvature $K$ as a function of the minimum gap $\Delta$, obtained by varying the bond length $R$.
We observe that both $L$ and $K$ remain bounded independently of $\Delta$, unlike the theoretical worst-case scaling.
This behavior accounts for the performance of CGS and leads to the reduced gap scaling from $\Delta^{-2}$ to $\Delta^{-1}$, as confirmed in Fig.~\ref{fig:3}(d).

\begin{figure*} 
\centering
\centerline{\includegraphics[width=17.2cm]{./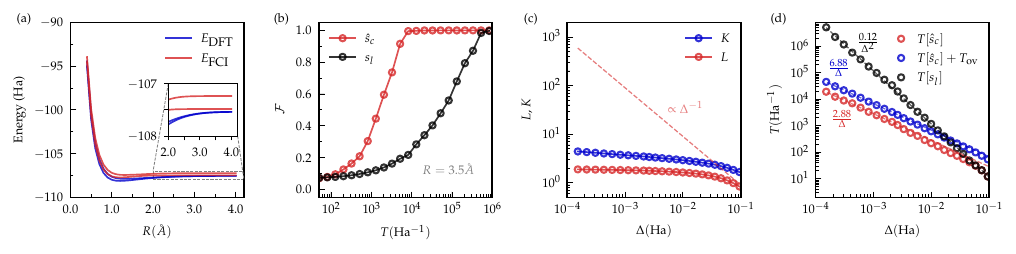}}
\caption{Application of the CGS schedule to the nitrogen molecule.  
(a) Low-lying energy spectra from Density Functional Theory (DFT, blue) and Full Configuration Interaction (FCI, red) calculations.  
(b) Fidelity \(\mathcal{F}\) as a function of the evolution time \(T\) at bond length \(R = 3.5\,\text{\AA}\).  
(c) The adiabatic path length \(L\) and the total curvature \(K\)
as a function of $\Delta$. Dotted red line corresponds to worst-case scaling of \(L\).
(d) Evolution time needed to reach $\mathcal{F}=0.75$ versus the minimum gap $\Delta$.
The constant-speed schedule (red) achieves the optimal $\Delta^{-1}$ scaling, while the linear schedule (black) follows $\Delta^{-2}$.
The total time including overlap-overhead (blue) also follows the optimal scaling.
}
\label{fig:3}
\end{figure*}

\subsection{[2Fe-2S]}
We demonstrate our method on a [2Fe-2S] cluster, a common bioinorganic
motif and a representative strongly correlated system in
quantum chemistry~\cite{Beinert1997, Johnson2005, Reiher2017, Sharma2014}.
Following Lee et al.~\cite{Lee2023}, we test a wide range of initial
Slater determinants and reproduce their observation that ASP performance under
a linear schedule is highly unpredictable: the required evolution time
varies over eight orders of magnitude depending on the initial state and
often exceeds the estimated cost of quantum phase estimation ($T_{\textrm{QPE}}$)~\cite{Lee2023}.
This extreme sensitivity arises from the $T\propto\Delta^{-2}$ scaling,
which strongly amplifies variations in the minimum gap $\Delta$.

Our constant-speed schedule mitigates this issue by achieving the optimal $T\propto\Delta^{-1}$ scaling. To quantitatively compare the two approaches across many initial states, we use the heuristic estimate~\cite{Lee2023}
\begin{align}
  T^{\textrm{est}}_\textrm{ASP} = \max_{\tau\in[0,1]}\frac{|\braket{\Phi(\tau)|\dot{H}(\tau)|\mathcal{E}(\tau)}|}{\Delta^2(\tau)},
\end{align}
where $\ket{\mathcal{E}(\tau)}$ is the first excited state at $\tau$.

The results in Fig.~\ref{fig:4}(b) confirm that the optimal gap scaling leads to improved reliability. 
While $T^{\textrm{est}}_\textrm{ASP}$ for the linear schedule (black) shows the previously reported large variation, the CGS schedule (red) exhibits much smaller variance
and consistently stays below $T_\textrm{QPE}$ (dotted line). 
By achieving optimal scaling, our method thus makes ASP robust and less unpredictable. 

Figure~\ref{fig:4}(c) presents the actual adiabatic evolution results for a representative case, where the initial state corresponds to the DFT ground state [indicated by crosses in Fig.~\ref{fig:4}(b)].
The estimates in (b) suggest a speedup of more than two orders of magnitude with the CGS schedule, and the simulation in (c) confirms this expectation (see SM~\cite{SM} Fig.~S1(b) for the shape of $\hat{s}_c$).
Using the CGS schedule, the system achieves a fidelity of 75\% with $T=9.48\times 10^3$, which is approximately 289 times faster than the linear schedule ($T=2.74\times 10^6$).
Consistent with the findings in the N$_2$ molecule, the observed performance gain and the reduced gap scaling are directly attributed to the geometric properties shown in Fig.~\ref{fig:4}(d), where both the adiabatic path length $L$ and the total curvature $K$ remain bounded independently of the minimum gap $\Delta$.

\begin{figure*} 
\centering
\centerline{\includegraphics[width=17.2cm]{./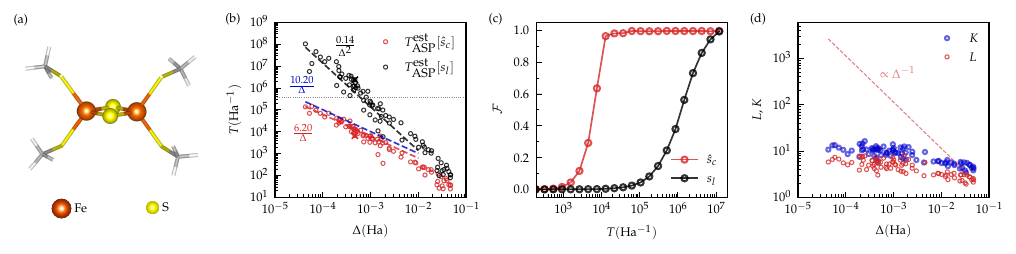}}
\caption{Application of the CGS schedule to the [2Fe-2S] cluster.  
(a) Molecular structure.  
(b) Estimated adiabatic evolution time \(T_\textrm{ASP}^\textrm{est}\) versus the minimum energy gap \(\Delta\), showing the optimal \(\Delta^{-1}\) scaling for \(\hat{s}_c\) (red) compared with the \(\Delta^{-2}\) scaling of \(s_l\) (black).  
(c) Fidelity \(\mathcal{F}\) as a function of evolution time \(T\)
for the initial state chosen as the DFT ground state, corresponding to the crosses in (b).  
Plot shows a speedup of more than two orders of magnitude with \(\hat{s}_c\).
(d) The adiabatic path length \(L\) and the total curvature \(K\)
as a function of $\Delta$. Dotted red line corresponds to worst-case scaling of \(L\).}
\label{fig:4}
\end{figure*}

\section{Conclusion}
We have introduced the constant geometric speed (CGS) schedule as a general strategy to improve the gap scaling of adiabatic state preparation. By enforcing uniform evolution along the eigenstate path, the method reduces the gap dependence of the runtime by one order compared with generic schedules, provided that the adiabatic path length $L$ remains bounded independently of $\Delta$. Furthermore, its overlap-guided construction makes the approach practical without requiring \textit{a priori} spectral knowledge.
Our numerical results on the adiabatic Grover search, the nitrogen molecule, and the [2Fe-2S] cluster demonstrate a clear quadratic speedup, improving the adiabatic scaling from $\mathcal{O}(\Delta^{-2})$ to $\mathcal{O}(\Delta^{-1})$ and substantially enhancing the reliability of state preparation.
Our numerical experiments indicate that the gap-independent boundedness of geometric quantities $L$ and $K$ is a typical feature in practice, supporting the broad applicability of the proposed method.
While this empirical observation emerges consistently across our tested systems, a rigorous characterization of the Hamiltonian classes exhibiting such gap-independent geometric boundedness, together with a formal proof of schedule optimality, is left for future work. Overall, our findings indicate that exploiting the geometry of the adiabatic path provides an effective strategy for improving quantum simulation.

\begin{acknowledgments}
We used resources of
the Center for Advanced Computation at Korea
Institute for Advanced Study
and the National Energy Research Scientific Computing Center (NERSC), a U.S. Department of Energy
Office of Science User Facility operated under
Contract No. DE-AC02-05CH11231.
SC was supported by a KIAS Individual Grant (No. CG090601)
at Korea Institute for Advanced Study and by Institute of Information \& Communications
Technology Planning \& Evaluation (IITP) grant
funded by the Korea government (MSIT) (No.~2022-0-01026).
M.H. is supported by a KIAS Individual Grant (No. CG091302) at Korea Institute for Advanced Study and by Quantum Simulator Development Project for Materials Innovation through
the National Research Foundation of Korea (NRF) funded
by the Korean government (Ministry of Science and ICT(MSIT))(No. NRF-2023M3K5A1094813).
\end{acknowledgments}

\textit{Data and code availability.}---The numerical data, figures, and the code used to reproduce them are openly available at Ref.~\cite{CodeRepo}.

\appendix
\section{Practical algorithm and complexity analysis}
Here, we discuss the practical quantum algorithm to construct the segmented CGS
$\hat{s}_c(\tau)$ based on Theorem~\ref{theorem:css}.
The key step is to compute the eigenstate overlaps $|\braket{\Phi(s_j)|\Phi(s_{j+1})}|^2$.
One approach to calculate this overlap is through the use of projection operators,
$P(s) = \ket{\Phi(s)}\bra{\Phi(s)}$. The squared overlap can be computed from the
ratio of two expectation values:
\begin{align}
|\braket{\Phi(s_j)|\Phi(s_{j+1})}|^2 = \frac{p_j(s_{j+1})}{f_j},
\label{eq:overlap_ratio}
\end{align}
where
\begin{align}
f_j &= \|P(s_j)\ket{\psi(t_j)}\|^2, \label{eq:fj_def} \\
p_j(s_{j+1}) &= \|P(s_{j+1}) P(s_j)\ket{\psi(t_j)}\|^2.
\label{eq:pj_def}
\end{align}
The quantity $f_j$ serves as a fidelity estimate between the evolved state
$\ket{\psi(t_j)}$ and the instantaneous eigenstate $\ket{\Phi(s_j)}$, and it remains
non-negligible as long as the evolution time is sufficiently long. Consequently, the
segment length is computed as
\begin{align}
\Delta l_{j+1} = \sqrt{1 - \frac{p_j(s_{j+1})}{f_j}}.
\label{eq:segment_length_from_proj}
\end{align}
This procedure provides a general framework for constructing the schedule, assuming
one has a method to implement the required projection operators. The entire procedure
is summarized in Algorithm~\ref{alg:css_qzmc}.

At first glance, it might appear simpler to compute segment lengths at pre-determined uniform grid points rather than using the root-finding procedure in Algorithm~\ref{alg:css_qzmc}. However, as seen in the adiabatic Grover problem (Fig.~2),
the energy gap \(\Delta(s)\) often becomes small only in a very narrow region of \(s\),
so a uniform grid may fail to capture the point where long evolution time is required.
In contrast, our root-finding procedure adaptively places grid points in proportion to the required time, ensuring that such regions are properly resolved.

A concrete implementation of the projection operators required in
Algorithm~\ref{alg:css_qzmc} can be realized using the Quantum Zeno
Monte Carlo (QZMC) method~\cite{mchan2025QZMC}. The QZMC method
approximates the projector with a Gaussian that can be represented as a integral of
time evolution,
\begin{align}
P(s) &\approx e^{-\frac{\beta^2}{2} (H(s) - E(s))^2} \nonumber \\&=
\frac{1}{\sqrt{2\pi\beta^2}}\int_{-\infty}^{\infty} e^{-\frac{t^2}{2\beta^2}} e^{-i(H(s)-E(s))t} dt
\label{eq:gaussian_projection}
\end{align}
where $E(s)$ is the energy eigenvalue of \(H(s)\) corresponding to $\ket{\Phi(s)}$. This approach
requires two additional parameters: $\beta$, which determines the
width of the projection operator, and the number of Monte Carlo
samples $N_\nu$ to construct integral in Eq.~\eqref{eq:gaussian_projection}. The accuracy of this approximation and the
associated resources are governed by these two parameters.
Furthermore, the energy eigenvalue \(E(s)\) is determined using the method described in
Sec.~\uppercase{ii} of the SM.

Next, we analyze the computational cost of the proposed algorithm in terms of circuit depth and iteration count.
In this analysis, the target segment length $\Delta l_t$ is treated as a constant parameter based on the following reasoning.
Since the primary role of the constant geometric speed schedule is to distribute the evolution time in proportion to the geometric length, it is sufficient that $\Delta l_t$ remains small compared to the total path length $L$ to obtain a practical advantage.
This reasoning is supported by our numerical experiments (Figs.~\ref{fig:2}--\ref{fig:4}), where a fixed value of $\Delta l_t = 0.2$ was employed and demonstrated consistent scaling improvements.

  \begin{algorithm}
         \refstepcounter{algorithm}\label{alg:css_qzmc}
        \vspace{0.0ex}\hrule height 0.5pt
        \vspace{0.5ex}

        \noindent\textbf{Algorithm \thealgorithm } Segmented CGS Schedule Construction

        \vspace{0.5ex}\hrule height 0.5pt
        \vspace{0.5ex}
        \begin{algorithmic}
          \State \textbf{Input:} Target segment length $\Delta l_{t}$, time for the first
   step $t_1$.
          \State \textbf{Output:} Segmented CGS schedule $\hat{s}_c(\tau)$.
          \State \textbf{Initialize:} $j\gets0$, $t_0\gets0$, $s_0\gets0$, reference
  length $\Delta l_{r} \gets 0$.
          \State \textbf{Initialize:} Schedule points $\mathcal{S} \gets \{(t_0, s_0)\}$.
          \While{$s_{j}<1$}
            \State Construct the schedule $\hat{s}(t)$ by interpolating points in
  $\mathcal{S}$.
            \State Prepare $\ket{\psi(t_j)}$ by evolving $\ket{\Phi(s_0)}$ for time $t_j$
   using $\hat{s}(t)$.
            \State Define $g(s') \gets {p_j(s')}/{f_j} - (1-\Delta
  l_{t}^2)$.
            \State Find $s_{j+1} > s_j$ that solves $g(s_{j+1}) = 0$.
            \State Compute $\Delta l_{j+1} \gets \sqrt{1 -p_j(s_{j+1}) / f_j}$.
            \If{$j=0$}
              \State Set reference length $\Delta l_{r} \gets \Delta l_1$.
              \State Set next time point $t_{j+1} \gets t_1$.
            \Else
              \State Set next time point $t_{j+1} \gets t_j + t_1 \cdot (\Delta l_{j+1} /
   \Delta l_{r})$.
            \EndIf
            \State Add the new point $(t_{j+1}, s_{j+1})$ to $\mathcal{S}$.
            \State $j \gets j+1$.
          \EndWhile
          \State Set total evolution time $T \gets t_j$.
          \State Construct final schedule $\hat{s}(t)$ by interpolating all points in
  $\mathcal{S}$.
          \State \textbf{Return} normalized schedule $\hat{s}_c(\tau) = \hat{s}(T\tau)$.
        \end{algorithmic}
        \vspace{0.5ex}\hrule height 0.5pt
  \end{algorithm}

The \textit{circuit depth} is determined by the total time evolution
length, which consists of the adiabatic evolution time \( T \)
and the additional time length \(T_\mathrm{ov}\) required to implement the
projection operator. The
latter scales as \( \mathcal{O}(\beta) \)~\cite{mchan2025QZMC},
where \( \beta \) is the parameter in the Gaussian approximation of
the projection operator in Eq.~\eqref{eq:gaussian_projection}.
Therefore, the total time evolution length is
\begin{align}
T_\text{tot} = T + \mathcal{O}(\beta),
\end{align}
Here, \( T\) is a adiabatic evolution time with the constant geometric speed schedule.
For our algorithm to be reliable, the adiabatic evolution time $T$
should be large enough to ensure $f_j > 1-\eta$ for all $j$, with $\eta$ taken 
not too large (e.g., $\eta < 0.5$).

To construct the projector that resolves the target eigenstate from
nearby eigenstates, \( \beta \) should satisfy \( \beta >
\Delta_\text{min}^{-1} \), where \( \Delta_\text{min} \) denotes the minimum energy
gap in the adiabatic path, i.e., \( \Delta_\text{min} =
\operatorname{min}_\tau \Delta(\tau)\). In our numerical experiments,
setting \(\beta = 2 \Delta_\text{min}^{-1}\) was sufficient. Therefore,
\begin{align}
  T_\text{tot} = T + \mathcal{O}(\Delta_\text{min}^{-1}).
\end{align}

This shows that the additional cost from the projection operator,
$\mathcal{O}(\Delta_\text{min}^{-1})$, matches the optimal scaling (Eq.~\eqref{eq:T_lowerbound}) of
adiabatic algorithms with respect to the minimum energy gap.
Therefore, the projection does not worsen the overall scaling of the
evolution time. Further rigorous analysis shows that the required $\beta$ also depends on 
parameters such as the target segment length $\Delta l_t$ and the precision 
$\epsilon$ of the estimate $p_j(s_{j+1})/f_j$, which are independent of the Hamiltonian,
so is not discussed here. Such details are provided in Sec.~\uppercase{ii} of the SM.

The \textit{iteration count} of the algorithm is determined by two
factors: (i) the number of schedule segments \( m \), and (ii) the
number of Monte Carlo evaluations required for each segment.

The number of segments is determined by the overlap condition
\begin{align}
|\braket{\Phi(s_j)|\Phi(s_{j+1})}|^2 =  1 - \Delta l_t^2,\quad
\Delta l_{j+1} = \Delta l_t. \label{eq:overlap_condition}
\end{align}
Since the total path length \( L \) is approximately equal to \(
\sum_j \Delta l_j \), the number of segments is
\begin{align}
m \approx \frac{L}{\Delta l_t}.
\end{align}

The primary computational task within each segment is the
root-finding search for the next point $s_{j+1}$ that satisfies
Eq.~\eqref{eq:overlap_condition}. This search is iterative,
requiring multiple evaluations of the overlap. We define
$\overline{N}_{\text{root}}$ as the average number of these 
evaluations needed per segment. In addition to these 
$\overline{N}_{\text{root}}$ steps, a separate evaluation is
required to compute the fidelity term $f_j$. Therefore, the total
number of Monte Carlo evaluations per segment is approximately
$\overline{N}_{\text{root}} + 1$.
Each of these evaluations uses $N_\nu$ Monte Carlo samples, and the
standard deviation for this statistical estimation is bounded by $\sqrt{2}N_\nu^{-1/2}(1-\eta)^{-1}$ (See
Sec.~\uppercase{ii}\,C in SM for derivation). To ensure the error in
the overlap estimation is below a threshold $\epsilon$, it suffices
to set $N_\nu \geq 2\epsilon^{-2} (1-\eta)^{-2}$. Combining these factors, the total number of Monte Carlo samples
required, $N_\text{total} = m \cdot
 (\overline{N}_{\text{root}} + 1) \cdot N_\nu$, is
\begin{align}
N_{\text{total}} \approx (\overline{N}_{\text{root}} + 1)
\frac{L}{\Delta l_t} 
\frac{2}{\epsilon^2} \frac{1}{(1-\eta)^2}.
\end{align}

Having expressed the repetitions in terms of the root-finding steps, we now theoretically analyze the complexity of $N_{\text{root}}$. In our algorithm, determining the increment $\Delta s$ requires solving $|\braket{\Phi(s_j)|\Phi(s_j + \Delta s)}|^2 = 1 - \Delta l_t^2$. Our implementation utilizes a bracketing strategy followed by a root-finding method (e.g., Brent's method): we start with a safe lower bound $\delta s$ and iteratively double the step size ($\Delta s_{\text{trial}} \leftarrow 2 \Delta s_{\text{trial}}$) until the interval $[0.5 \Delta s_{\text{trial}}, \Delta s_{\text{trial}}]$ encloses the solution, after which the root is refined within this bracket. To guarantee that the initial search does not overlook the solution in regions with large geometric speed $v_{\max} \propto \Delta^{-1}$, the lower bound $\delta s$ must scale as $\delta s \lesssim \Delta l_t / v_{\max} \propto \Delta$. Since $\Delta s$ is at most $1$, the maximum number of doubling steps required to cover the dynamic range scales as
\begin{align}
    N_{\text{bracket}} \leq \log_2\left(\frac{1}{\delta s}\right) = \mathcal{O}(\log(\Delta^{-1})).
\end{align}
Upon completion of the bracketing phase, the root is confined to an interval of width $W = 0.5 \Delta s_{\text{trial}}$. Since the true step size $\Delta s$ falls within this interval ($0.5 \Delta s_{\text{trial}} \le \Delta s \le \Delta s_{\text{trial}}$), the interval width $W$ scales proportionally with $\Delta s$. Consequently, for a target relative precision $\epsilon_{\text{rel}}$, the number of refinement steps is given by $\log_2(W / (\epsilon_{\text{rel}} \Delta s)) \approx \log_2(1/\epsilon_{\text{rel}})$, which is independent of the gap. Therefore, the total complexity of the root-finding procedure per segment becomes
\begin{align}
N_{\text{root}} = \mathcal{O}(\log(\Delta^{-1})), \label{eq:N_root}
\end{align}
confirming that the root-finding overhead scales only logarithmically with the inverse gap.

Next, we analyze the gap scaling of the cumulative runtime, $T_\text{run}$, defined as the sum of the total time evolution lengths over all segments and repetitions.
For the $j$-th segment, the evolution time required for a single circuit instance is given by
\begin{align}
  T_{\text{tot}, j} \approx \frac{j}{m} T + \mathcal{O}(\Delta^{-1}). \label{eq:T_tot_j}
\end{align}
This circuit is repeated $(N_{\text{root},j}+1) N_\nu$ times to estimate the overlap.
Since the number of repetitions $(N_{\text{root},j}+1) N_\nu$ scales at most logarithmically with $\Delta^{-1}$ as discussed previously, the cumulative runtime $T_\text{run} = \sum_{j=0}^{m-1} T_{\text{tot}, j} (N_{\text{root},j}+1) N_\nu$ satisfies
\begin{align}
  T_\text{run} &< \mathcal{O}(\log(\Delta^{-1})) \sum_{j=0}^{m-1} T_{\text{tot}, j} \nonumber\\
  &\lesssim \mathcal{O}(\log(\Delta^{-1})) \left(\frac{mT}{2} + m \mathcal{O}(\Delta^{-1})\right). \label{eq:T_run_sum}
\end{align}
Given that the number of segments scales as $m=\mathcal{O}(L)$, we obtain
\begin{align}
  T_\text{run} < \mathcal{O}(L\log(\Delta^{-1})) \left(T+\mathcal{O}(\Delta^{-1})\right). \label{eq:T_run_L}
\end{align}
If $L$ exhibits worst-case scaling ($L \sim \Delta^{-1}$), the adiabatic evolution time $T$ is lower-bounded by $\mathcal{O}(\Delta^{-2})$, leading to a cumulative runtime of $T_\text{run} \sim \mathcal{O}(\Delta^{-3})$ (up to logarithmic corrections). This scaling is indeed less favorable than that of the standard linear schedule.
However, in cases where $L$ remains bounded independently of $\Delta$, $T_\text{run}$ scales linearly with $T$ (up to a logarithmic factor).
Consequently, if the CGS schedule achieves $T=\mathcal{O}(\Delta^{-1})$, the cumulative runtime also follows $T_\text{run} \sim \mathcal{O}(\Delta^{-1})$.
This confirms that the quadratic speedup of our method persists even after considering the total cost.


In addition to the above theoretical analysis, we specify here the
parameter choices used in our practical implementation of the algorithm
for the numerical experiments in Figs.~\ref{fig:2}--\ref{fig:4}.
We used \( \beta = 2 \Delta^{-1} \) and estimated $T_\mathrm{ov} = 2\beta = 4\Delta^{-1}$
as a reasonable overhead for the projection operations,
because at most two projection operations are required to compute each overlap.
Moreover, we used \( \Delta l_t = 0.2\)
and \( N_\nu = 10^4 \), which yielded accurate
schedules with moderate computational cost. 
Under these
parameter settings, the average number of root-finding steps per
segment was observed to be \( \overline{N}_{\text{root}} <20 \).

Finally, we highlight several practical aspects of the algorithm.
First, in many applications, the minimum gap \( \Delta \) is not
known beforehand. For such cases, a suitable value of 
\( \beta \) can be found empirically by starting with a small value
and iteratively increasing it (e.g., along a geometric sequence
\(\beta_k = 2^k \beta_0\)) until convergence of the resulting
schedule is reached. Second,
the initial segment time \( t_1 \) sets the overall timescale for
the evolution via the relation \( T \approx t_1 \cdot (L / \Delta
l_t) \). A suitable value for \( t_1 \) can be estimated from
the application of Eq.~\eqref{eq:infidelity_bound} through a perturbative expansion \(\ket{\Phi(s)}\) near \(s=0\)
or can be adjusted iteratively until the
fidelity \( f_1 \) exceeds a desired threshold. Furthermore, the
root-finding step in Algorithm~\ref{alg:css_qzmc} need not be highly
 precise. Theorem~\ref{theorem:css} does not require the segment
lengths \( \Delta l_j \) to be uniform; it is sufficient to
identify a point \( s_{j+1} \) that yields an overlap close to the
target value, so only a few root-finding steps are necessary. Lastly, for
constructing a continuous schedule from the discrete
points, we employed monotonic cubic spline
interpolation~\cite{Fritsch1984}, as implemented in
SciPy~\cite{SciPy2020}. Our numerical results in
Figs.~\ref{fig:2}-\ref{fig:4} show that while this interpolation
method can introduce small variations in the final fidelity, it does
 not affect the scaling of the adiabatic evolution time
with respect to the minimum gap.

\end{document}


\AtBeginDocument{\pdfcatalog{/OpenAction null}}

\title{Supplemental Material for ``The Constant Geometric Speed Schedule for Adiabatic State Preparation"}

\author{Mancheon Han}
\email{mchan@kias.re.kr}
\affiliation{School of Computational Sciences, Korea Institute for Advanced Study (KIAS), Seoul, 02455, Korea}
\author{Hyowon Park}
\affiliation{Materials Science Division, Argonne National Laboratory, Argonne, IL, 60439, USA}
\affiliation{Department of Physics, University of Illinois at Chicago, Chicago, IL, 60607, USA}
\author{Sangkook Choi}
\email{sangkookchoi@kias.re.kr}
\affiliation{School of Computational Sciences, Korea Institute for Advanced Study (KIAS), Seoul, 02455, Korea}
\date{\today}

\maketitle

\section{Proofs of Equations and Theorems in the main text}
\subsection{Proof of $\|Q\ddot P P\|=\sqrt{\dot v^2+\kappa^2 v^4}$}
Here we provide a proof of $\|Q\ddot P P\|=\sqrt{\dot v^2+\kappa^2 v^4}$
that is used to derive Eq.~\eqref{eq:infidelity_bound_with_speed}.  
Consider the arc-length \(l\) parametrization  
of the path $\{(H(s(\tau)), \ket{\Phi(s(\tau))})\}$,
where \(l\) can be written as
\[
l = \int_0^{\tau} \|\ket{\dot{\Phi}(\tau)}\|\, d\tau.
\]  
We denote by $'$ the derivative with respect to $l$.  
For convenience, we adopt the phase convention $\braket{\Phi(l)|\Phi'(l)} = 0$.  
The curvature of the path is then defined as~\cite{Alsing2024}  
\begin{align}
    \kappa^2(l) = \|Q(l)\ket{\Phi''(l)}\|^2 = \braket{\Phi''(l)|Q(l)|\Phi''(l)},
\end{align}
where $Q(l) = I - \ket{\Phi(l)}\bra{\Phi(l)}$ is the projection operator onto the subspace orthogonal to $\ket{\Phi(l)}$.  
Note that $\|\ket{\Phi'(l)}\|=1$, which implies
\begin{align}
    \braket{\Phi'(l)|\Phi''(l)} + \braket{\Phi''(l)|\Phi'(l)} = 0. \label{eq:orthonormality}
\end{align}

It follows that $\ket{\dot{\Phi}(\tau)}$ can be written as
\begin{align*}
    \ket{\dot{\Phi}} = \frac{d \ket{\Phi}}{d\tau} =
    \frac{dl}{d\tau} \frac{d \ket{\Phi}}{dl} = v(\tau) \ket{\Phi'},
\end{align*}
and similarly
\begin{align*}
    \ket{\ddot{\Phi}(\tau)} &= \frac{d \ket{\dot{\Phi}}}{d\tau} 
    = \frac{d}{d\tau} \left(v(\tau) \ket{\Phi'}\right) \\
    &= \dot{v}(\tau) \ket{\Phi'} + v(\tau) \frac{dl}{d\tau}\ket{\Phi''} \\
    &= \dot{v}(\tau) \ket{\Phi'} + v^2(\tau) \ket{\Phi''}.
\end{align*}

On the other hand,
\[
    \|Q \ddot{P} P\| = \|Q \ket{\ddot{\Phi}}\|.
\]
Using Eqs.~\eqref{eq:orthonormality} and the above, we obtain
\begin{align*}
    \|Q \ket{\ddot{\Phi}}\|^2 
    = \braket{\ddot{\Phi}|Q|\ddot{\Phi}} 
    = \dot{v}^2 \braket{\Phi'|Q|\Phi'} + v^4 \braket{\Phi''|Q|\Phi''}.
\end{align*}
Since $\braket{\Phi'|Q|\Phi'} = 1$ and $\braket{\Phi''|Q|\Phi''} = \kappa^2$, we find
\begin{align*}
    \|Q \ket{\ddot{\Phi}}\|^2 = \dot{v}^2  + \kappa^2 v^4,
\end{align*}
which completes the proof.

\subsection{Proof of Theorem~\ref{theorem:css}}
The proof of theorem~\ref{theorem:css} relies on the following lemma,
\begin{lemma}
\label{lemma:path_segment_overlap}
Let \( \ket{\Phi(s)} \) be a normalized and differentiable family of eigenstates of a parameter-dependent Hamiltonian \( H(s) \), describing an adiabatic path for \( s \in [0,1] \). Then, to the leading order in \( \Delta s \), the overlap between nearby eigenstates satisfies
\begin{align}
  |\braket{\Phi(s) | \Phi(s + \Delta s)}|^2 = 1 - (\Delta l)^2, \label{eq:path_segment_length}
\end{align}
where \( \Delta l\) is defined in Eq.~\eqref{eq:delta_l}.
\end{lemma}

\begin{proof}
By differentiating \( \braket{\Phi(s) | \Phi(s)} = 1 \) with respect to \( s \), we obtain
\[
\partial_s \braket{\Phi(s) | \Phi(s)} = 2 \operatorname{Re} \braket{\Phi(s) | \partial_s \Phi(s)} = 0.
\]
The global phase of \( \ket{\Phi(s)} \) is arbitrary,
so we fix the gauge such that
\(
\braket{\Phi(s) | \partial_s \Phi(s)}
\)
is real. Then the above equation implies
\[
\braket{\Phi(s) | \partial_s \Phi(s)} = 0.
\]
Differentiating this condition once more gives
\[
\braket{\Phi(s) | \partial_s^2 \Phi(s)} = -\left\| \partial_s \ket{\Phi(s)} \right\|^2.
\]

Using a second-order Taylor expansion of \( \ket{\Phi(s + \Delta s)} \), we obtain
\[
|\braket{\Phi(s) | \Phi(s + \Delta s)}|^2 = 1 - \left\| \partial_s \ket{\Phi(s)} \right\|^2 (\Delta s)^2 + \mathcal{O}((\Delta s)^3).
\]

Meanwhile, the path segment length satisfies
\[
\Delta l = \left\| \partial_s \ket{\Phi(s)} \right\| \Delta s + \mathcal{O}((\Delta s)^2).
\]
Substituting this into the overlap expression yields
\[
|\braket{\Phi(s) | \Phi(s + \Delta s)}|^2 = 1 - (\Delta l)^2 + \mathcal{O}((\Delta s)^3),
\]
which proves the lemma.
\end{proof}

Then, we can prove Theorem~\ref{theorem:css}, which is stated again here for clarity.

\begin{restate}
Suppose a monotonic sequence \( s_0 = 0 < s_1 < \dots < s_m = 1 \) is given, and that the eigenstate overlaps 
\[
|\braket{\Phi(s_j) | \Phi(s_{j+1})}|^2 = 1 - (\Delta l_{j+1})^2
\]
are known for each adjacent pair \( (s_j, s_{j+1}) \). Then, any schedule
\( \hat{s}(\tau) \) satisfying \( \hat{s}(\tau_j) = s_j \) and  
\[
\tau_{j+1} = \tau_j + \frac{\Delta l_{j+1}}{\sum_{i=1}^m \Delta l_i}
\]
converges to the constant geometric speed schedule \( s_c \) in the limit \( \max_j (s_{j+1} - s_j) \rightarrow 0 \).
\end{restate}

\begin{proof}
By construction, the average speed on each interval \([s_j,s_{j+1}]\) is given by
\[
\frac{\Delta l_{j+1}}{\tau_{j+1} - \tau_j} = \sum_{i=1}^{m} \Delta l_i.
\]
As \( \Delta s_j \rightarrow 0 \), we have \( \tau_{j+1} - \tau_j \rightarrow 0 \), and the left-hand side converges to the instantaneous speed in normalized time \( \tau \):
\[
\frac{\Delta l_{j+1}}{\tau_{j+1} - \tau_j}
\rightarrow \left\| \frac{d}{d\tau} \ket{\Phi_s(\tau)} \right\|_{\tau=\tau_j}.
\]
At the same time, the right-hand side satisfies \( \sum_{i=1}^m \Delta l_i \rightarrow L \).
Therefore, in the limit \( \max_j (s_{j+1} - s_j) \rightarrow 0 \),
\[
v_s(\tau) = \left\| \ket{\dot{\Phi}_s(\tau)} \right\| = L,
\]
so we have the constant geometric speed schedule.
\end{proof}

\subsection{Constant Geometric Speed Schedule for Grover's Search}
\begin{theorem}
The optimal schedule for the adiabatic Grover problem is a
constant geometric speed schedule, $s_c(\tau)$. \label{theorem:grover}
\end{theorem}
\begin{proof}
  Since $H_i = \mathbb{I} - \ket{\phi}\bra{\phi}$ and $H_f=\mathbb{I}-\ket{m}\bra{m}$,  
the evolution is confined to the subspace spanned by $\ket{\phi}$ and $\ket{m}$.  
A convenient basis for this subspace is $\{\ket{m},\ket{m^\perp}\}$, where  
$\ket{m^\perp}=(N-1)^{-1/2}\sum_{i\neq m} \ket{i}$.  
Here $N=2^n$ denotes the Hilbert space dimension for $n$ qubits.  
In this basis, 
\begin{align*}
  H(s) &= \frac{\mathbb{I}}{2}- \frac{1}{2} \left[ v_z(s) Z + v_x (s) X \right],\\
  v_z(s) &= 1 - 2 (1-s) \left(1-\frac{1}{N}\right),\\
  v_x(s) &= (1-s)\frac{2}{\sqrt{N}} \sqrt{1-\frac{1}{N}},
\end{align*}
where $Z$ and $X$ denote the Pauli matrices.  
The energy gap is $\Delta(s) = \sqrt{v_x^2 + v_z^2}$ and the ground state is  
\(\ket{\Phi(s)} = [\cos(\theta(s)/2), \, \sin(\theta(s)/2)]^T\),  
with $\cos(\theta) = v_z/\Delta$ and $\sin(\theta) = v_x/\Delta$.  

From this expression, constant geometric speed corresponds to $d\theta/dt$ being constant.  
Since $\theta$ is a monotonically decreasing function, we set $d\theta/dt = -c$ with $c>0$.  
Differentiating $\cos(\theta) = v_z/\Delta$ with respect to $t$ gives
\begin{align*}
    -\frac{d\theta}{dt} \sin(\theta) = \frac{ds}{dt} \left(\frac{(\partial_s v_z) \Delta-(\partial_s\Delta) v_z}{\Delta^2}\right),
\end{align*}
where $\partial_s$ denotes differentiation with respect to $s$.  
Substituting $d\theta/dt = -c$ and $\sin(\theta) = v_x/\Delta$, we obtain
\begin{align*}
    c = \frac{ds}{dt} \frac{(\partial_s v_z) v_x - v_z (\partial_s v_x)}{v_z^2+v_x^2}.
\end{align*}
Using the above definitions, this becomes
\begin{align*}
    c\frac{dt}{ds} = 2\frac{\sqrt{N-1}}{N} \frac{1}{(2s-1)^2+4s(1-s)/N}.
\end{align*}
Integrating with the boundary condition $s(t_f)=1$ yields
\begin{align*}
    \tau = \frac{t}{t_f} = \frac{1}{2} + \frac{\tan^{-1}(\sqrt{N-1} (2s-1))}{2\tan^{-1}(\sqrt{N-1})}.
\end{align*}
Inverting this relation gives
\[
    s_c(\tau) = \frac{1}{2} + \frac{1}{2\sqrt{N\shortminus1}} \tan\!\left[\left(2
\tau\shortminus1\right)\tan^{-1}(\sqrt{N\shortminus1})\right], 
\]
which coincides with the optimal schedule~\cite{Roland2002,Albash2018}.
\end{proof}

\section{Overlap estimation and error analysis}
In this section, we describe how the Quantum Zeno Monte Carlo~\cite{mchan2025QZMC} method can be used to 
estimate eigenstate overlap, and provide a detailed analysis of the associated errors.

\subsection{Overlap estimation via the approximate projector \texorpdfstring{$P_H^\beta(E)$}{PHbeta(E)}}
Our goal is to estimate the overlap \(|\braket{\Phi_0|\chi}|^2\) between a given state \(\ket{\chi}\) and the target eigenstate \(\ket{\Phi_0}\).  
In the main text we introduced the Gaussian operator \(P_H^\beta(E)\), which approximates the spectral projection onto the eigenstate with energy \(E\).  
In particular, when \(E\) coincides with the target eigenvalue \(E_0\), the operator \(P_H^\beta(E_0)\) acts as an approximate projector onto \(\ket{\Phi_0}\).  
Here, we discuss how to estimate \(E_0\) using \(P_H^\beta(E)\), and analyze the error of the estimated overlap.

In the present work, we did not use the predictor--corrector method proposed in our previous study~\cite{mchan2025QZMC}.  
Instead, we employed the following method to estimate the target eigenenergy at each step.  
Consider a state \(\ket{\chi}\), which can be expressed as
\begin{align}
  \ket{\chi} = \sum_{k} c_k \ket{E_k},
\end{align}
where \(\ket{E_k}\) is an eigenstate of the Hamiltonian \(H\) with eigenvalue \(E_k\).  
Then, using the approximate projection operator  
\[
  P^\beta_H(E) = \exp\!\left[-\tfrac{1}{2}\beta^2(H - E)^2\right],
\]
we define
\begin{align}
  g(E) &= \|P^\beta_H(E)\ket{\chi}\|^2 = \braket{\chi|P^{\sqrt{2}\beta}_H(E)|\chi}  \nonumber \\
       &= \sum_{k} w_k e^{-\beta^2 (E_k - E)^2},
\end{align}
where \(w_k = |c_k|^2\).  

The eigenvalue \(E_0\), corresponding to the target eigenstate \(\ket{\Phi_0}\),  
can then be estimated by optimizing \(g(E)\).  
Moreover, the estimated energy \(E^*\) converges exponentially to the true eigenvalue \(E_0\) as \(\beta \to \infty\),  
and \(g(E^*)\) provides a reliable estimate of the overlap \(|\braket{\Phi_0|\chi}|^2\),  
which also converges exponentially to the true overlap.  
To prove this, we use an upper-bounding quadratic polynomial of \(g(E)\), as established by the following lemma.  

\begin{lemma}
  \label{lemma:quadratic_upperbound}
  Let \(h(x)\) be a twice differentiable function on \([a,b]\).  
  Define the quadratic function
  \begin{align}
    u_c(x) = h(c) + h'(c)(x-c) + \frac{M}{2}(x-c)^2, \label{eq:quadratic_upperbound}
  \end{align}
  where \(M\) satisfies \(\sup_{x\in[a,b]}|h''(x)| \leq M\) and \(c\in[a,b]\) is arbitrary.
  Then \(h(x) \leq u_c(x)\) for all \(x\in[a,b]\).
\end{lemma}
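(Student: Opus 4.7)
The plan is to invoke Taylor's theorem with the Lagrange form of the remainder, expanding $h$ around the point $c$. For any fixed $x \in [a,b]$ with $x \neq c$, since $h$ is twice differentiable on $[a,b]$, there exists $\xi$ strictly between $c$ and $x$ (hence $\xi \in [a,b]$) such that
\begin{equation*}
h(x) = h(c) + h'(c)(x-c) + \frac{h''(\xi)}{2}(x-c)^2.
\end{equation*}
Subtracting $u_c(x)$ from this expression gives $h(x) - u_c(x) = \tfrac{1}{2}(h''(\xi) - M)(x-c)^2$, and the hypothesis $|h''(\xi)| \leq M$ immediately yields $h''(\xi) - M \leq 0$. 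Combined with $(x-c)^2 \geq 0$, this delivers $h(x) \leq u_c(x)$. The boundary case $x = c$ is trivial since both sides equal $h(c)$.

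As an equivalent route I would consider introducing the gap function $d(x) := u_c(x) - h(x)$, noting that $d(c) = 0$ and $d'(c) = 0$ by construction, while $d''(x) = M - h''(x) \geq 0$ everywhere on $[a,b]$. Thus $d$ is convex with a critical point at $c$, so $c$ is a global minimizer on $[a,b]$ and $d(x) \geq d(c) = 0$. This convexity argument has the minor pedagogical advantage of exposing why the quadratic coefficient must dominate $\sup|h''|$ rather than just $\sup h''$: one needs $d''\geq 0$ on the entire interval regardless of the sign of $h''$.

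Honestly, there is no real obstacle here; the statement is a textbook consequence of second-order Taylor expansion. The only point that deserves a moment of attention is the regularity hypothesis: "twice differentiable on $[a,b]$" is understood to mean that $h'$ exists on $[a,b]$ (and is therefore continuous) and that $h''$ exists throughout, which is exactly the minimal regularity needed to apply Taylor's theorem with Lagrange remainder to order two. I would state this just once at the outset and then proceed with the one-line computation above.
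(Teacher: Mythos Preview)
Your proof is correct and follows essentially the same approach as the paper: both are second-order Taylor expansions around $c$ with the remainder bounded via $|h''|\le M$. The paper writes the remainder in integral form, $\int_c^x\!\int_c^y h''(z)\,dz\,dy$, and bounds the integrand pointwise by $M$, whereas you invoke the Lagrange form directly (and offer an equivalent convexity variant); these are interchangeable here.
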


\begin{proof}
  \begin{align}
    h(x) &= h(c) + \int_c^x h'(y)\,dy  \nonumber \\
         &= h(c) + h'(c)(x-c) + \int_c^x \int_c^y h''(z)\,dz\,dy \nonumber \\
         &\leq h(c) + h'(c)(x-c) + \int_c^x \int_c^y M\,dz\,dy  \nonumber \\
         &= h(c) + h'(c)(x-c) + \frac{M}{2}(x-c)^2. \nonumber
  \end{align}
  Here, $'$ denotes differentiation with respect to \(x\).
\end{proof}

We also require the following lemmas.  

\begin{lemma}
  \label{lemma:quadratic_inequality}
  For \(0 \leq y \leq x^2\), the following inequality holds:
  \begin{align}
    \frac{y}{2x} \leq x - \sqrt{x^2 - y} \leq \frac{y}{x}. \nonumber
  \end{align}
\end{lemma}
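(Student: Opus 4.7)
The plan is to rationalize the middle expression. Multiplying numerator and denominator by the conjugate $x + \sqrt{x^2-y}$ gives
\[
x - \sqrt{x^2 - y} \;=\; \frac{(x-\sqrt{x^2-y})(x+\sqrt{x^2-y})}{x+\sqrt{x^2-y}} \;=\; \frac{y}{x+\sqrt{x^2-y}},
\]
which reduces the problem to bounding a single scalar denominator.

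Next I would bound this denominator. Under the hypothesis $0 \leq y \leq x^2$ (with $x>0$, which is implicit since we divide by $x$), we have $0 \leq \sqrt{x^2-y} \leq x$, and therefore
\[
x \;\leq\; x+\sqrt{x^2-y} \;\leq\; 2x.
\]
Taking reciprocals flips the inequalities, and multiplying through by $y \geq 0$ preserves them, yielding
\[
\frac{y}{2x} \;\leq\; \frac{y}{x+\sqrt{x^2-y}} \;\leq\; \frac{y}{x},
\]
which, combined with the rationalization identity above, is exactly the desired bound.

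There is no substantive obstacle here; the only delicate point is that the statement implicitly assumes $x>0$ (otherwise the right-hand side $y/x$ would be ill-defined or have the wrong sign), and one should note that the endpoint cases $y=0$ and $y=x^2$ are handled correctly by the argument without separate treatment.
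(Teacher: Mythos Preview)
Your argument is correct. It differs from the paper's, which proceeds by squaring the two candidate bounds: one checks that $(x-\tfrac{y}{2x})^2 = x^2 - y + \tfrac{y^2}{4x^2} \geq x^2 - y$ and $(x-\tfrac{y}{x})^2 = x^2 - 2y + \tfrac{y^2}{x^2} \leq x^2 - y$ (the latter using $y \leq x^2$), and then takes square roots. Your conjugate-rationalization route is arguably cleaner, since it reduces both inequalities to a single transparent bound $x \leq x+\sqrt{x^2-y} \leq 2x$ and avoids the need to verify nonnegativity of $x-\tfrac{y}{2x}$ and $x-\tfrac{y}{x}$ before undoing the squaring. The paper's approach, on the other hand, is a direct algebraic verification that requires no preliminary identity. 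Either way the content is the same elementary fact.
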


\begin{proof}
  Left inequality:
  \begin{align*}
    \left(x-\frac{y}{2x}\right)^2 = x^2 - y + \frac{y^2}{4x^2} \geq x^2 - y.
  \end{align*}
  Right inequality:
  \begin{align*}
    \left(x-\frac{y}{x}\right)^2 = x^2 - 2y + \frac{y^2}{x^2} \leq x^2 - y,
  \end{align*}
  where the last inequality follows from \(x^2 \geq y\), which implies \(y/x^2 \leq 1\) and hence \(y^2/x^2 \leq y\).
\end{proof}

\begin{lemma}
  \label{lemma:gaussian_inequality}
  For \(x \geq 2\), the function
  \[
    f(y) = \frac{x}{x-y}e^{-y^2/2}
  \]
  is increasing in \(y\) whenever \(0 \leq y \leq x/(x^2-1)\).
\end{lemma}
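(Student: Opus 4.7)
The plan is to differentiate $f$, reduce the sign of $f'(y)$ to that of a simple quadratic in $y$, and then show that this quadratic stays nonnegative on the interval $[0, x/(x^2-1)]$. Writing $f(y) = (x/(x-y))\,e^{-y^2/2}$ and differentiating, I can factor out the positive quantity $f(y)$ (using $x-y > 0$ on the interval, since $x/(x^2-1) < x$) to obtain
\begin{equation*}
f'(y) = f(y)\left[\frac{1}{x-y} - y\right].
\end{equation*}
Because $x-y > 0$, the sign of $f'$ matches the sign of $1 - y(x-y)$, i.e., of the quadratic $q(y) := y^2 - xy + 1$. The task therefore reduces to proving $q(y) \ge 0$ for $0 \le y \le x/(x^2-1)$.

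Next, I would exploit the convexity of $q$. Its unique minimum occurs at $y = x/2$, so $q$ is strictly decreasing on $[0, x/2]$. The hypothesis $x \ge 2$ gives $x^2 - 1 \ge 3 \ge 2$, hence $x/(x^2-1) \le x/2$, and the entire interval $[0, x/(x^2-1)]$ lies in the region where $q$ is decreasing. The minimum of $q$ on the interval is therefore attained at the right endpoint, so it suffices to check $q(x/(x^2-1)) \ge 0$. Clearing denominators,
\begin{equation*}
q\!\left(\frac{x}{x^2-1}\right) = \frac{x^2 - x^2(x^2-1) + (x^2-1)^2}{(x^2-1)^2} = \frac{1}{(x^2-1)^2} > 0,
\end{equation*}
so $q(y) \ge 1/(x^2-1)^2 > 0$ throughout the interval and $f'(y) \ge 0$ as claimed.

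No single step is conceptually hard, but the choice of route does matter. It is tempting to invoke Lemma~\ref{lemma:quadratic_inequality} to bound the smaller root $r_- = (x-\sqrt{x^2-4})/2$ of $q$ and then argue $y \le r_-$. That lemma, however, only yields $r_- \ge 2/x$, which is smaller than $x/(x^2-1)$, so it cannot place the interval entirely below $r_-$. The monotonicity-plus-endpoint-evaluation argument above is both cleaner and sharper, and it also reveals why the somewhat peculiar cutoff $x/(x^2-1)$ is the right one: it is precisely the value at which $q$ drops to $1/(x^2-1)^2$, which is the worst case under the hypothesis $x \ge 2$.
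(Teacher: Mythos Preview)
Your argument is correct and proceeds along the same line as the paper's: compute $f'(y)$, factor out the positive part, and reduce to showing $q(y)=y^2-xy+1\ge 0$ on $[0,x/(x^2-1)]$. The paper finishes by locating the smaller root $r_1=2/(x+\sqrt{x^2-4})$ and proving $r_1>x/(x^2-1)$ through a short chain of algebraic inequalities; you instead use that $q$ is decreasing on $[0,x/2]\supseteq[0,x/(x^2-1)]$ and evaluate $q$ at the right endpoint to get $1/(x^2-1)^2>0$. Your variant avoids the square root entirely and makes the role of the cutoff $x/(x^2-1)$ more transparent, while the paper's version makes explicit the full interval $[0,r_1]$ on which $f$ increases. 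One small slip in your closing aside: applying Lemma~\ref{lemma:quadratic_inequality} with $y=4$ gives $x-\sqrt{x^2-4}\ge 2/x$, hence $r_-\ge 1/x$ (not $2/x$); your conclusion that this bound is too weak to place $x/(x^2-1)$ below $r_-$ is nevertheless correct.
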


\begin{proof}
  \begin{align*}
    \frac{d}{dy}\!\left(\frac{x}{x-y}e^{-y^2/2}\right) &=
    \frac{x}{x-y}\!\left(\frac{1}{x-y}-y\right) e^{-y^2/2} \\ 
    &= \frac{x}{(x-y)^2} (y^2 -xy + 1) e^{-y^2/2}.
  \end{align*}
  Thus $f(y)$ is increasing when $0 \leq y \leq r_1$,  
  where $r_1$ is the smaller root of $y^2 - xy + 1 = 0$, namely \(2/(x+\sqrt{x^2-4})\).  
  Furthermore,
  \begin{align*} (x^2-2)^2 - x^2 (x^2-4) = 4 >0, \\ x^2-2 > x\sqrt{x^2-4},
    \\ 2x^2-2 > x (x+\sqrt{x^2-4}),
    \\ \frac{2}{x+\sqrt{x^2-4}} > \frac{x}{x^2-1}. \end{align*}
  Therefore, if \(0 \leq y \leq x/(x^2-1)\), the function $f(y)$ is indeed increasing.
\end{proof}
\begin{theorem}
  \label{theorem:energy_estimation_error}
  Let \(\Delta\) denote the spectral gap between the target eigenvalue \(E_0\) and the rest of the spectrum of the Hamiltonian \(H\).  
Define \(r = w_0/\sum_{k\neq0} w_k\).  
Suppose \(\beta\) satisfies
\begin{align}
  \beta > \Delta^{-1} \max\!\left(\sqrt{2},\sqrt{\mathcal{W}(1+c_0 + c_1/r)}\right), \label{eq:beta_condition}
\end{align}
where \(\mathcal{W}(x) = \tfrac{1}{2} - W_{-1}(-\min(0.25\sqrt{e}\,x^{-1},e^{-1}))\),  
\(c_0 = 1.3802\), \(c_1 = 0.3749\), and \(W_{-1}\) denotes the $-1$ branch of the Lambert \(W\) function.  
Then there exists a local maximizer \(E^*\) of \(g(E)\) such that
\begin{align}
  |E^*-E_0| \leq \epsilon_u. \label{eq:energy_bound1}
\end{align}
Here, $\epsilon_u < \Delta/3$ and is determined as
\begin{align}
  \epsilon_u &= \frac{1}{\sqrt{2}\beta}\,
  \frac{a_1 - \sqrt{a_1^2 - 2a_0a_2}}{a_2}, \nonumber \\
  a_0 &= \sqrt{2}\,\beta \Delta e^{-\beta^2\Delta^2}, \nonumber \\
  a_1 &= r - (2\beta^2\Delta^2-1)\,e^{-\beta^2\Delta^2}, \nonumber \\
  a_2 &= c_0 r + c_1. \label{eq:energy_bound2}
\end{align}
Moreover,
\begin{align}
  w_0 \leq g(E^*) \leq w_0 \,\frac{\Delta}{\Delta-\epsilon_u}\, e^{-\beta^2 \epsilon_u^2}. \label{eq:norm_bound}
\end{align}
\end{theorem}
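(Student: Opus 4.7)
My approach is to recast the problem in standardized Gaussian coordinates and locate a zero of $g'$ near $E_0$ via the intermediate value theorem, using Lemma~\ref{lemma:quadratic_upperbound} to upper-bound $g'$ on a computable window. Let $y=\sqrt{2}\beta(E-E_0)$ and $G(y)=g(E)/w_0 = e^{-y^2/2} + \sum_{k\neq 0}(w_k/w_0)\,e^{-(y-y_k)^2/2}$, where $y_k=\sqrt{2}\beta(E_k-E_0)$ with $|y_k|\geq x_0:=\sqrt{2}\beta\Delta$; the target is to find $y^*$ with $G'(y^*)=0$ satisfying $|y^*|\leq y_+:=\sqrt{2}\beta\epsilon_u$. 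Derivative estimates come from monotonicity of $|y|\,e^{-y^2/2}$, $(y^2-1)e^{-y^2/2}$, and $|y(y^2-3)|\,e^{-y^2/2}$ past their outermost stationary points: under $\beta\Delta>\sqrt{2}$, the noise centers $|y_k|\geq x_0$ lie in the monotone tail of the first two, giving $|G'(0)|\leq a_0/r$ and $G''(0)\leq -a_1/r$. For $\sup_{|y|\leq y_+}|G'''(y)|$, the signal always contributes the global maximum $c_0\approx 1.3802$ of $|\phi'''|$, while each noise term, evaluated at $|y-y_k|\geq x_0-y_+$, contributes at most the outer-extremum value $c_1\approx 0.3749$, provided $x_0-y_+$ exceeds the outer stationary point of $\phi'''$. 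The Lambert-$W$ hypothesis on $\beta$ is engineered precisely to secure this tail condition; summing yields $\sup|G'''|\leq c_0+c_1/r = a_2/r$.

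Applying Lemma~\ref{lemma:quadratic_upperbound} to $G'$ at $c=0$ (and without loss of generality assuming $G'(0)\geq 0$, by reflecting $y\mapsto -y$ if needed) yields $G'(y)\leq (a_0-a_1 y+a_2 y^2/2)/r$ on $[0,y_+]$. The smaller root of this upper quadratic is exactly $y_+=(a_1-\sqrt{a_1^2-2a_0a_2})/a_2$, so continuity of $G'$ provides a zero $y^*\in[0,y_+]$ by the intermediate value theorem. Integrating the bound $|G'''|\leq a_2/r$ gives $G''(y^*)\leq G''(0)+(a_2/r)y_+\leq -a_1/r+(a_2/r)y_+<0$, where the last inequality uses $y_+<a_1/a_2$ from the quadratic formula; hence $y^*$ is a genuine local maximum and $|E^*-E_0|\leq\epsilon_u$.

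The sandwich on $g(E^*)$ then follows in two steps. The lower bound $g(E^*)\geq w_0$ is immediate from $G(y^*)\geq G(0)\geq 1$, since $G'\geq 0$ on $[0,y^*]$. For the upper bound, use $|E_k-E^*|\geq\Delta-|E^*-E_0|$ to obtain $G(y^*)\leq e^{-y^{*2}/2}+(1/r)e^{-(x_0-|y^*|)^2/2}$; the critical-point condition $G'(y^*)=0$, combined with the tail monotonicity of $|y|e^{-y^2/2}$, yields the auxiliary inequality $(1/r)(x_0-|y^*|)e^{-(x_0-|y^*|)^2/2}\leq |y^*|\,e^{-y^{*2}/2}$, which rewrites the previous upper bound as $(x_0/(x_0-|y^*|))e^{-y^{*2}/2}$. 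Finally Lemma~\ref{lemma:gaussian_inequality} with $x=x_0\geq 2$ (ensured by $\beta\Delta\geq\sqrt{2}$) monotonically upgrades $y=|y^*|\leq y_+$ to $y_+$, delivering $g(E^*)\leq w_0\,\Delta/(\Delta-\epsilon_u)\,e^{-\beta^2\epsilon_u^2}$.

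The main obstacle is that the construction is self-referential: the bound $M=a_2/r$ on $\sup|G'''|$ requires the interval $[-y_+,y_+]$ to satisfy $x_0-y_+\geq y_*$ for the outer stationary point $y_*$ of $\phi'''$, yet $y_+$ is itself defined through $a_2$. Closing this loop is precisely the role of the Lambert-$W$ hypothesis on $\beta$, which simultaneously guarantees (i) $a_1^2\geq 2a_0a_2$ so that $y_+$ is real, (ii) $y_+<a_1/a_2$ so that the critical point is a genuine maximum, (iii) $x_0-y_+\geq y_*$ so that the tail $c_1$ bound on $|\phi'''|$ applies on the whole window, and also the range hypothesis $y_+\leq x_0/(x_0^2-1)$ needed to invoke Lemma~\ref{lemma:gaussian_inequality}. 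Extracting the compact form $\beta\Delta\geq\sqrt{\mathcal{W}(1+c_0+c_1/r)}$ from these constraints amounts to inverting a transcendental relation of the type $(2u-1)e^{-u}\lesssim r/[2(1+c_0+c_1/r)]$ with $u=\beta^2\Delta^2$, which naturally produces the $-1$ branch of the Lambert $W$ function.
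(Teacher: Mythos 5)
Your architecture matches the paper's at the level of the three lemmas --- a quadratic upper bound on the derivative via Lemma~\ref{lemma:quadratic_upperbound}, the root formula via Lemma~\ref{lemma:quadratic_inequality}, and Lemma~\ref{lemma:gaussian_inequality} for the norm bound --- but you apply everything directly to $g$ (your $G$), whereas the paper first replaces $g$ by the two-Gaussian surrogate $h(E) = w_0 e^{-E^2/(2\sigma^2)} + w_{\neq 0}e^{-(\Delta-E)^2/(2\sigma^2)}$, proves $g\le h$ and $g'\le h'$ on the \emph{fixed} window $[0,\sigma]$, and runs the whole argument on $h$. This difference creates two genuine gaps. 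First, your bound $\sup_{|y|\le y_+}|G'''|\le c_0+c_1/r$ is exactly the self-referential step you flag, and you never close the loop: you assert that the Lambert-$W$ hypothesis secures $x_0-y_+\ge$ (outer stationary point of $\phi'''$) $\approx 2.33$, but the hypothesis only forces $x_0\gtrsim 2.56$ while the a priori bound $y_+\le x_0/(x_0^2-1)$ gives $y_+\lesssim 0.46$, so $x_0-y_+\approx 2.1<2.33$ and your stated criterion fails. The criterion you actually need is $x_0-y_+\ge\sqrt{3}$, the outer zero of $(3z-z^3)e^{-z^2/2}$, beyond which $|\phi'''|\le c_1$; this does hold, but it must be verified quantitatively, and the verification itself relies on the bound on $y_+$ that you are in the middle of establishing. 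The paper sidesteps all of this by fixing the Taylor window to $[0,\sigma]$ (i.e.\ $|y|\le 1$) before $\epsilon_u$ is known and computing $c_0,c_1$ for the surrogate on that window.

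Second, and more decisively, your route to the upper bound in Eq.~\eqref{eq:norm_bound} does not go through. From $G'(y^*)=0$ and tail monotonicity of $ze^{-z^2/2}$ one obtains $|y^*|e^{-y^{*2}/2}\le r^{-1}(x_0-|y^*|)e^{-(x_0-|y^*|)^2/2}$ --- the \emph{reverse} of your ``auxiliary inequality'' --- so you cannot convert $e^{-y^{*2}/2}+r^{-1}e^{-(x_0-|y^*|)^2/2}$ into $\tfrac{x_0}{x_0-|y^*|}e^{-y^{*2}/2}$. The relation you want is available only for the surrogate: at the critical point $\epsilon_h$ of $h$ the balance $\epsilon_h w_0 e^{-\epsilon_h^2/(2\sigma^2)}=(\Delta-\epsilon_h)w_{\neq0}e^{-(\Delta-\epsilon_h)^2/(2\sigma^2)}$ is an exact identity, giving $h(\epsilon_h)=\tfrac{\Delta}{\Delta-\epsilon_h}w_0e^{-\epsilon_h^2/(2\sigma^2)}$, after which the paper concludes via $g(E^*)\le h(E^*)\le h(\epsilon_h)$ and Lemma~\ref{lemma:gaussian_inequality}. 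A direct argument on $g$ can be salvaged only when every noise level lies on the same side of $E^*$, so that $\sum_k\tfrac{w_k}{w_0}e^{-(y_k-y^*)^2/2}\le\tfrac{1}{x_0-y^*}\sum_k\tfrac{w_k}{w_0}(y_k-y^*)e^{-(y_k-y^*)^2/2}=\tfrac{y^*}{x_0-y^*}e^{-y^{*2}/2}$; with spectrum on both sides of $E_0$ the signs break this identity. The surrogate is therefore not a cosmetic device but the ingredient that makes the norm bound (and the clean ordering $0\le E^*\le\epsilon_h\le\epsilon_u$) provable.
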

\begin{proof}
  Without loss of generality, set \(E_0=0\).  
  For notational convenience, let \(\sigma = (\sqrt{2}\beta)^{-1}\) denote the standard deviation of each Gaussian term. Then
  \begin{align*}
    g(E) = w_0 e^{-\frac{E^2}{2\sigma^2}} + \sum_{k\neq 0} w_k e^{-\frac{(E_k - E)^2}{2\sigma^2}}.
  \end{align*}
  Its derivative is
  \begin{align*}
    g'(E) = \frac{1}{\sigma^2}\Bigl(-E w_0 e^{-\frac{E^2}{2\sigma^2}}
      + \sum_{k\neq 0} w_k (E_k - E) e^{-\frac{(E_k - E)^2}{2\sigma^2}}\Bigr).
  \end{align*}

  Suppose \(g'(0) > 0\) (the case \(g'(0) < 0\) can be treated analogously).  
  Since \(g'(0) > 0\), a local maximizer \(E^*\) of \(g(E)\) exists with \(E^* \ge 0\).  
  Let \(w_{\ne 0} = \sum_{k\ne 0} w_k\) and define the spectral gap from the target level by
  \(\Delta := \min_{k\ne 0} |E_k|\).
  Consider
  \begin{align*}
    h(E) = w_0 e^{-\frac{E^2}{2\sigma^2}} + w_{\ne 0} e^{-\frac{(\Delta - E)^2}{2\sigma^2}}.
  \end{align*}
  For \(E \in [0,\sigma]\) and every \(k\ne 0\), we have
  \(|E_k - E| \ge |E_k| - |E| \ge \Delta - E\).  
  Hence
  \[
    e^{-(E_k - E)^2/(2\sigma^2)} \le e^{-(\Delta - E)^2/(2\sigma^2)},
  \]
  and summing over \(k\ne 0\) yields \(h(E) \ge g(E)\) on \([0,\sigma]\).

  Moreover, since \(\beta \ge \sqrt{2}/\Delta\), we have \(\sigma = (\sqrt{2}\beta)^{-1} \le \Delta/2\), and thus
  \[
    |E_k - E| \ge |E_k| - |E| \ge \Delta - \sigma \ge \sigma, \qquad E\in[0,\sigma].
  \]
  Therefore, since the function \(x \mapsto x e^{-x^2/(2\sigma^2)}\) is decreasing for \(x \ge \sigma\),
  \begin{align*}
    g'(E)
    &\le \sigma^{-2}\left(-E w_0 e^{-\frac{E^2}{2\sigma^2}}
    + w_{\ne 0} (\Delta - E) e^{-\frac{(\Delta - E)^2}{2\sigma^2}}\right) \\
    &= h'(E),
  \end{align*}
  for all \(E \in [0,\sigma]\).

  Applying Lemma~\ref{lemma:quadratic_upperbound} to \(h'(E)\) with \(c=0\) yields
  \[
    g'(E) \leq h'(E) \leq \frac{u(E)}{\sigma},
  \]
  where
  \begin{align*}
    u(E) &= A_0 - A_1 \frac{E}{\sigma} + \tfrac{1}{2}A_2 \left(\tfrac{E}{\sigma}\right)^2, \\
    A_0 &= \sigma h'(0) = w_{\ne 0} \tfrac{\Delta}{\sigma} e^{-\frac{\Delta^2}{2\sigma^2}}, \\
    A_1 &= -\sigma^2 h''(0) = w_0 - w_{\ne 0}\!\left(\tfrac{\Delta^2}{\sigma^2}-1\right) e^{-\frac{\Delta^2}{2\sigma^2}}, \\
    A_2 &= c_0 w_0 + c_1 w_{\ne 0}.
  \end{align*}
  The constants \(c_0 = 1.3802\) and \(c_1 = 0.3749\) are chosen to satisfy  
  \(\sup_{E\in[0,\sigma]} |h'''(E)| \leq \sigma^{-3}(c_0 w_0 + c_1 w_{\ne 0})\),  
  which follows from a direct calculation of
  \[
    h'''(E) = \sigma^{-3}\bigl(w_0 r(E/\sigma) - w_{\ne 0} r((\Delta-E)/\sigma)\bigr),
  \]
  with \(r(x) = (3x-x^3)e^{-x^2/2}\).  
  Numerical evaluation shows that \(\max_{x\in[0,1]} r(x) \approx 1.3801\) and \(\max_{x\ge 1} -r(x) \approx 0.3748\),  
  consistent with the choice of \(c_0,c_1\).

  Since \(g'(E) \le u(E)/\sigma\), the existence of a root \(\epsilon_u\) of \(u(E)\) in \([0,\sigma]\) guarantees the existence of a root \(\epsilon_h\) of \(h'(E)\) and a root \(E^*\) of \(g'(E)\) with  
  \(0 \leq E^* \leq \epsilon_h \leq \epsilon_u\).  
  To ensure such roots exist, it suffices that \(A_1 > 0\) and \(A_1^2 - 2 A_0 A_2 \ge 0\).  
  These conditions can be reduced to the sufficient condition in Eq.~\eqref{eq:beta_condition} by the following argument.

  First, note that
  \[
    A_0 = w_{\ne 0}\frac{\Delta}{\sigma} e^{-\Delta^2/(2\sigma^2)}
    < w_{\ne 0}\Bigl(\frac{\Delta^2}{\sigma^2}-1\Bigr) e^{-\Delta^2/(2\sigma^2)},
  \]
  because \(\Delta/\sigma > 2\).  
  Denote the right-hand side by \(y\). Then \(A_1 = w_0 - y\).  
  Moreover,
  \[
    0 \le A_1^2 - 2 y A_2 \le A_1^2 - 2 A_0 A_2,
  \]
  so it suffices to require
  \[
    (w_0 - y)^2 - 2 y A_2 \ge 0.
  \]
  This inequality is satisfied if
  \[
    y \le (w_0 + A_2) - \sqrt{(w_0 + A_2)^2 - w_0^2}.
  \]
  By Lemma~\ref{lemma:quadratic_inequality}, this in turn holds if
  \[
    y \le \frac{w_0^2}{2(w_0 + A_2)}.
  \]
  Because \(w_0^2/(w_0 + A_2) \le w_0\), this condition also ensures \(A_1 = w_0 - y \ge 0\).

  Translating this inequality back into a condition on \(\beta\),  
  from the definition of \(y\) we require
  \[
    w_{\ne 0} \Bigl(\frac{\Delta^2}{\sigma^2} - 1\Bigr) e^{-\Delta^2/(2\sigma^2)} \le s,
  \]
  where \(s = w_0^2/(2(w_0 + A_2))\).  
  Multiplying both sides by \(-e^{1/2}/2\) yields
  \[
    w_{\ne 0}\,\frac{1}{2}\Bigl(1 - \frac{\Delta^2}{\sigma^2}\Bigr) e^{\tfrac{1}{2}(1-\Delta^2/\sigma^2)} \ge -\frac{\sqrt{e}}{2}s.
  \]
  This inequality is satisfied if
  \[
    1 - \frac{\Delta^2}{\sigma^2}
      < 2 W_{-1}\!\left(-\min\!\left(\tfrac{\sqrt{e}}{2}s, e^{-1}\right)\right),
  \]
  which is equivalent to
  \[
    \frac{\Delta}{\sigma} > \sqrt{\,1 - 2 W_{-1}\!\left(-\min\!\left(\tfrac{\sqrt{e}}{2}s, e^{-1}\right)\right)}.
  \]
  Since \(\Delta/\sigma = \Delta \sqrt{2}\beta\), this condition is equivalent to
  \[
    \beta > \frac{1}{\sqrt{2}\,\Delta}
      \sqrt{\,1 - 2 W_{-1}\!\left(-\min\!\left(\tfrac{\sqrt{e}}{2}s, e^{-1}\right)\right)}.
  \]
  Substituting \(s = w_0^2/(2(w_0 + A_2)) = 1/(2(c_0+1) + 2c_1/r)\) yields Eq.~\eqref{eq:beta_condition}.  
  Hence the sufficient condition for \(A_1>0\) and \(A_1^2 - 2 A_0 A_2 \ge 0\) is established.

  Under this condition, the smaller root of \(u(E)\) is
  \[
    \epsilon_u = \sigma\,\frac{A_1 - \sqrt{A_1^2 - 2 A_0 A_2}}{A_2},
  \]
  giving Eq.~\eqref{eq:energy_bound1}.  
  Dividing numerator and denominator by \(w_{\ne 0}\) and using \(\sigma^{-1}=\sqrt{2}\beta\) provides the explicit form in Eq.~\eqref{eq:energy_bound2}.

  Furthermore, by Lemma~\ref{lemma:quadratic_inequality},
  \[
    \epsilon_u \le \sigma \frac{2A_0}{A_1}.
  \]
  Since \(w_0^2/(2(w_0+A_2)) \le w_0/3\), we have \(w_0 \ge 3y\), and thus \(A_1 = w_0 - y \ge 2y\).  
  This gives
  \[
    \epsilon_u \le \frac{2A_0}{A_1} \le \frac{2A_0}{2y}
    = \sigma\,\frac{\Delta/\sigma}{\Delta^2/\sigma^2 - 1}.
  \]
  From $\Delta/\sigma>2$, we have $\epsilon_u<\Delta/3$. 

  Moreover, by Lemma~\ref{lemma:gaussian_inequality}, the function
  \(\frac{\Delta}{\Delta-E} e^{-E^2/(2\sigma^2)}\) is increasing for \(E\in[0,\epsilon_u]\).  
  At \(E=\epsilon_h\), where \(h'(E)=0\), we have
  \[
    \epsilon_h w_0 e^{-\epsilon_h^2/(2\sigma^2)}
      = (\Delta-\epsilon_h) w_{\ne 0} e^{-(\Delta-\epsilon_h)^2/(2\sigma^2)}.
  \]
  Substituting this into \(h(\epsilon_h)\) gives
  \[
    h(\epsilon_h) = \frac{\Delta}{\Delta-\epsilon_h} w_0 e^{-\epsilon_h^2/(2\sigma^2)}.
  \]
  Since \(\tfrac{\Delta}{\Delta-E} e^{-E^2/(2\sigma^2)}\) is increasing for \(E\in[0,\epsilon_u]\), we obtain
  \[
    h(\epsilon_h) \le \frac{\Delta}{\Delta-\epsilon_u} w_0 e^{-\epsilon_u^2/(2\sigma^2)}.
  \]
  Finally, because \(g(E^*) \le h(E^*) \le h(\epsilon_h)\), we have
  \[
    g(E^*) \le \frac{\Delta}{\Delta-\epsilon_u} w_0 e^{-\epsilon_u^2/(2\sigma^2)}.
  \]
  On the other hand, since \(w_0 \le g(0)\le g(E^*)\), we also have the lower bound.  
  This establishes Eq.~\eqref{eq:norm_bound}.
\end{proof}
Theorem~\ref{theorem:energy_estimation_error} yields the following corollary.
\begin{corollary}
  \label{corollary:beta_energy}
  Retaining terms up to first order in \(e^{-\beta^2\Delta^2}\), \(\epsilon_u\) in Eq.~\eqref{eq:energy_bound2} becomes
  \begin{align}
    \epsilon_u = r^{-1}\, \Delta e^{-\beta^2\Delta^2}.
  \end{align}
  Therefore, if \(\beta\) satisfies Eq.~\eqref{eq:beta_condition} and
  \begin{align}
    \beta > \Theta\!\bigl(\Delta^{-1}\sqrt{\log(r^{-1}\epsilon^{-1}\Delta)}\bigr), \label{eq:beta_energy}
  \end{align}
  then the estimated energy \(E^{*}\) obeys \(|E^{*}-E_0| < \epsilon\).
\end{corollary}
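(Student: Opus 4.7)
The plan is to apply Lemma~\ref{lemma:quadratic_inequality} directly to the expression for $\epsilon_u$ in Eq.~\eqref{eq:energy_bound2}, identifying the small parameter, and then invert the leading-order estimate to obtain the sufficient condition on $\beta$.

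First, I would set $x=a_1$ and $y=2a_0 a_2$ in Lemma~\ref{lemma:quadratic_inequality}, which gives
\[
\frac{a_0}{a_1} \;\le\; \frac{a_1-\sqrt{a_1^2-2a_0a_2}}{a_2} \;\le\; \frac{2a_0}{a_1}.
\]
Under Eq.~\eqref{eq:beta_condition}, $\beta\Delta$ is bounded below by an expression that grows like $\sqrt{\log(1/r)}$, so $e^{-\beta^2\Delta^2}$ is small and in particular $a_0=\sqrt{2}\beta\Delta e^{-\beta^2\Delta^2}$ is exponentially small. Then the correction to $a_1$, namely $-(2\beta^2\Delta^2-1)e^{-\beta^2\Delta^2}$, is also a first-order quantity in $e^{-\beta^2\Delta^2}$, so $a_1=r+O(e^{-\beta^2\Delta^2})$. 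Hence, retaining only the leading term,
\[
\epsilon_u \;=\; \frac{1}{\sqrt{2}\beta}\cdot\frac{a_0}{a_1}\bigl(1+O(e^{-\beta^2\Delta^2})\bigr)
\;=\; \frac{1}{\sqrt{2}\beta}\cdot\frac{\sqrt{2}\beta\Delta\,e^{-\beta^2\Delta^2}}{r}
\;=\; r^{-1}\Delta\, e^{-\beta^2\Delta^2},
\]
which is the first claim.

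For the second claim, I would simply solve $r^{-1}\Delta e^{-\beta^2\Delta^2}<\epsilon$ for $\beta$, giving $\beta^2\Delta^2>\log(r^{-1}\epsilon^{-1}\Delta)$, i.e.\ $\beta>\Delta^{-1}\sqrt{\log(r^{-1}\epsilon^{-1}\Delta)}$. Combined with Eq.~\eqref{eq:beta_condition}, this is exactly Eq.~\eqref{eq:beta_energy}, and the bound $|E^{*}-E_0|\le\epsilon_u<\epsilon$ follows directly from Theorem~\ref{theorem:energy_estimation_error}.

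The only real subtlety will be the bookkeeping around the polynomial prefactors $\beta\Delta$ and $\beta^2\Delta^2$ that multiply $e^{-\beta^2\Delta^2}$: one must justify that the regime enforced by Eq.~\eqref{eq:beta_condition} is deep enough that (i) the upper and lower bounds from Lemma~\ref{lemma:quadratic_inequality} agree at leading order, and (ii) these polynomial factors are absorbed by the logarithm inside the $\Theta(\cdot)$. Since $\mathcal{W}$ grows like a logarithm of its argument and $W_{-1}$ behaves asymptotically like $-\log$, the condition in Eq.~\eqref{eq:beta_condition} already gives $\beta\Delta=\Omega(\sqrt{\log(1/r)})$, and any additional polynomial prefactor only shifts the logarithm by lower-order terms, which is harmlessly hidden inside $\Theta(\sqrt{\log(r^{-1}\epsilon^{-1}\Delta)})$.
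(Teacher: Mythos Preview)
Your approach is essentially the paper's: both reduce $\epsilon_u$ to $\sigma\,a_0/a_1$ and then replace $a_1$ by $r$ at leading order. One small imprecision: Lemma~\ref{lemma:quadratic_inequality} only gives the sandwich $a_0/a_1 \le (a_1-\sqrt{a_1^2-2a_0a_2})/a_2 \le 2a_0/a_1$, and that factor-of-two bracket does not by itself justify the $(1+O(e^{-\beta^2\Delta^2}))$ you write. The paper instead expands $\sqrt{1-x}=1-\tfrac{x}{2}+O(x^2)$ directly, which pins down the leading coefficient exactly; equivalently you could rationalize $a_1-\sqrt{a_1^2-2a_0a_2}=2a_0a_2\big/(a_1+\sqrt{a_1^2-2a_0a_2})$ and expand the denominator. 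This distinction matters for the first displayed claim (the exact first-order form $\epsilon_u=r^{-1}\Delta e^{-\beta^2\Delta^2}$), but is immaterial for the second, since any constant prefactor is absorbed into the $\Theta(\cdot)$ and your inversion to $\beta>\Delta^{-1}\sqrt{\log(r^{-1}\epsilon^{-1}\Delta)}$ goes through unchanged.
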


\begin{proof}
  From Eq.~\eqref{eq:energy_bound2}, \(\epsilon_u\) can be expressed as
  \begin{align*}
    \epsilon_u = \sigma a_1 \frac{1 - \sqrt{1 - 2 a_0 a_2/a_1^2}}{a_2}.
  \end{align*}
  Using \(\sqrt{1-x} = 1 - \tfrac{1}{2}x + \mathcal{O}(x^2)\), we obtain
  \begin{align*}
    \epsilon_u = \sigma \frac{a_0}{a_1}.
  \end{align*}
  up to first order in \(e^{-\beta^2\Delta^2}\). Moreover, since \(a_1 = r + \mathcal{O}(e^{-\beta^2\Delta^2})\),
  \begin{align*}
    \epsilon_u = \sigma \frac{a_0}{r} = r^{-1} \Delta e^{-\beta^2\Delta^2}.
  \end{align*}
  This proves the corollary.
\end{proof}

We also obtain the following corollary for estimating \(w_0\).
\begin{corollary}
  \label{corollary:norm_error}
  If \(\beta\) satisfies Eq.~\eqref{eq:beta_condition} and
  \begin{align}
    \beta > \Theta\!\bigl(\Delta^{-1}\sqrt{\log(r^{-1}(1+\epsilon^{-1}))}\bigr), \label{eq:beta_norm}
  \end{align}
  then there exists a local maximizer \(E^*\) of \(g(E)\) such that \(g(E^*)\) provides an estimate of \(w_0\) with relative error less than \(\epsilon\).
\end{corollary}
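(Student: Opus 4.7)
The plan is to feed the two-sided bound from Theorem~\ref{theorem:energy_estimation_error} into a direct algebraic estimate of the relative error \((g(E^*)-w_0)/w_0\), and then translate the resulting smallness condition on \(\epsilon_u\) into a condition on \(\beta\) via Corollary~\ref{corollary:beta_energy}. Concretely, I would first invoke Eq.~\eqref{eq:norm_bound} to write
\[
  1 \leq \frac{g(E^*)}{w_0} \leq \frac{\Delta}{\Delta-\epsilon_u}\,e^{-\beta^2\epsilon_u^2},
\]
and then discard the Gaussian factor using \(e^{-\beta^2\epsilon_u^2}\le 1\), which is harmless since it only makes the ratio smaller. This reduces the task to controlling \(\Delta/(\Delta-\epsilon_u)\).

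Next I would rewrite
\[
  \frac{\Delta}{\Delta-\epsilon_u}-1 = \frac{\epsilon_u}{\Delta-\epsilon_u},
\]
so that demanding relative error at most \(\epsilon\) becomes the elementary requirement \(\epsilon_u \leq \epsilon\Delta/(1+\epsilon)\), i.e.\ \(\epsilon_u/\Delta \leq 1/(1+\epsilon^{-1})\). Since Eq.~\eqref{eq:beta_condition} is already assumed, Theorem~\ref{theorem:energy_estimation_error} guarantees that a local maximizer \(E^*\) exists and that \(\epsilon_u\) is well-defined, so the only remaining issue is to make \(\epsilon_u\) sufficiently small.

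For that I would substitute the first-order expression \(\epsilon_u = r^{-1}\Delta e^{-\beta^2\Delta^2}\) from Corollary~\ref{corollary:beta_energy}. The condition on \(\epsilon_u\) then reduces to
\[
  e^{-\beta^2\Delta^2} \leq \frac{r}{1+\epsilon^{-1}},
\]
which on taking logarithms is equivalent to
\[
  \beta^2\Delta^2 \geq \log\!\bigl(r^{-1}(1+\epsilon^{-1})\bigr),
\]
i.e.\ Eq.~\eqref{eq:beta_norm} in \(\Theta\)-form.

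The only real subtlety, and the place where I would be most careful, is the use of the first-order expansion from Corollary~\ref{corollary:beta_energy}: the \(\mathcal{O}(e^{-2\beta^2\Delta^2})\) correction to \(\epsilon_u\) must be absorbed into the asymptotic constant hidden by the \(\Theta(\cdot)\) notation. Since this correction is exponentially smaller than the leading term whenever Eq.~\eqref{eq:beta_condition} holds, inflating the constant in front of the \(\sqrt{\log(\cdot)}\) by a fixed amount handles it, and this is the standard way such corollaries are stated. Everything else is a short chain of monotone substitutions.
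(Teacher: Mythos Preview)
Your proposal is correct and follows essentially the same route as the paper: drop the Gaussian factor in Eq.~\eqref{eq:norm_bound} to reduce the relative error to \(\epsilon_u/(\Delta-\epsilon_u)\), impose \(\epsilon_u \le \epsilon\Delta/(1+\epsilon)\), and then convert this into the \(\beta\) condition via the leading-order formula of Corollary~\ref{corollary:beta_energy}. Your explicit remark about absorbing the \(\mathcal{O}(e^{-2\beta^2\Delta^2})\) correction into the \(\Theta\)-constant is a welcome clarification that the paper leaves implicit.
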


\begin{proof}
  From Theorem~\ref{theorem:energy_estimation_error},
  \begin{align*}
    0 \leq g(E^*)-w_0 \leq w_0 \frac{\Delta}{\Delta-\epsilon_u} e^{-\beta^2 \epsilon_u^2} - w_0
    \leq w_0 \frac{\epsilon_u}{\Delta-\epsilon_u}.
  \end{align*}
  Therefore, if \(\epsilon_u \leq \epsilon \Delta/(1+\epsilon)\), then
  \begin{align*}
    0 \leq g(E^*)-w_0 \leq w_0 \epsilon.
  \end{align*}
  From Corollary~\ref{corollary:beta_energy}, this condition is guaranteed by choosing \(\beta\) satisfying Eq.~\eqref{eq:beta_condition} and
  \begin{align*}
    \beta = \Theta\!\bigl(\Delta^{-1}\sqrt{\log(r^{-1}(1+\epsilon^{-1}))}\bigr).
  \end{align*}
  This proves the corollary.
\end{proof}

\subsection{Application to Algorithm~\ref{alg:css_qzmc}}
We now apply Theorem~\ref{theorem:energy_estimation_error} and its corollaries to the case of Algorithm~\ref{alg:css_qzmc} in the main text.  
For an adiabatically evolved state \(\ket{\psi(t_j)}\), we estimate
\[
  f_j = \|P(s_j)\ket{\psi(t_j)}\|^2
\]
using the Gaussian approximation
\begin{align*}
  f_j^\beta &= \bigl\|e^{-\tfrac{\beta^2}{2}(H_j-\mathcal{E}_{j})^2}\ket{\psi(t_j)}\bigr\|^2 \\
  &= \braket{\psi(t_j)|e^{-\beta^2(H_j-\mathcal{E}_{j})^2}|\psi(t_j)},
\end{align*}
where \(H_j = H(s_j)\) and \(\mathcal{E}_j\) is the estimated eigenenergy obtained from the procedure described in the previous section.  

Similarly, we estimate
\[
  p_j(s_{j+1}) = \|P(s_{j+1}) P(s_j)\ket{\psi(t_j)}\|^2
\]
by
\begin{align}
  p_j^\beta(s_{j+1}) &= \braket{\xi_j|e^{-\beta^2(H_{j+1}-\mathcal{E}_{j+1})^2}|\xi_j}, \nonumber \\
  \ket{\xi_j} &= e^{-\tfrac{\beta^2}{2}(H_j-\mathcal{E}_{j})^2}\ket{\psi(t_j)}. 
\end{align}
Theorem~\ref{theorem:energy_estimation_error} and Corollary~\ref{corollary:norm_error} can be applied directly, as in the case of \(f_j^\beta\).  
The main quantity of interest is the parameter \(w_0\) in Theorem~\ref{theorem:energy_estimation_error}, which here takes the form
\[
  w_0 = \braket{\xi_j|\Phi(s_{j+1})}\braket{\Phi(s_{j+1})|\xi_j}.
\]
We denote this quantity by \(\tilde{p}_j\) and derive its upper and lower bounds.

By construction, we have \(|\braket{\Phi(s_j)|\Phi(s_{j+1})}|^2 = 1-\Delta l_t^2\).  
By choosing a phase convention such that \(\braket{\Phi(s_j)|\Phi(s_{j+1})} = \sqrt{1-\Delta l_t^2}\),
we decompose \(\ket{\xi_j}\) as follows:
\begin{align*}
  \ket{\xi_j} &= a \ket{\Phi(s_j)} + b \ket{\Phi^\perp(s_j)}, \\
  f_j^\beta &= |a|^2 + |b|^2 ,
\end{align*}
where \(\ket{\Phi^\perp(s_j)}\) is normalized and orthogonal to \(\ket{\Phi(s_j)}\).  

From Theorem~\ref{theorem:energy_estimation_error}, the energy estimate obeys \(|\mathcal{E}_j| \leq \epsilon_{uf}\), where \(\epsilon_{uf}\) is given by Eq.~\eqref{eq:energy_bound2} with \(r = f_j/(1-f_j)\).  
Consequently,
\begin{align*}
  |a|^2 &\geq f_j e^{-\beta^2 \epsilon_{uf}^2}, \\
  \frac{|a|^2}{f_j^\beta} &\geq \frac{\Delta_j-\epsilon_{uf}}{\Delta_j}, \\
  \frac{|b|^2}{f_j^\beta} &\leq \frac{\epsilon_{uf}}{\Delta_j},
\end{align*}
where \(\Delta_j\) is the minimum energy gap between \(\ket{\Phi(s_j)}\) and the other eigenstates of \(H(s_j)\).  
For convenience, define \(\delta_f = \epsilon_{uf}/\Delta_j\). Then
\[
  \frac{|a|^2}{f_j^\beta} \ge 1-\delta_f, \qquad \frac{|b|^2}{f_j^\beta} \le \delta_f,
\]

We now evaluate
\begin{align*}
  \tilde{p}_j &= |\braket{\Phi(s_{j+1})|\xi_j}|^2 \\
  &= |a \braket{\Phi(s_{j+1})|\Phi(s_j)} + b \braket{\Phi(s_{j+1})|\Phi^\perp(s_j)}|^2.
\end{align*}
Since \(\braket{\Phi(s_{j+1})|\Phi(s_j)} = \braket{\Phi(s_{j})|\Phi(s_{j+1})} = \sqrt{1-\Delta l_t^2}\) and \(|\braket{\Phi(s_{j+1})|\Phi^\perp(s_j)}|\leq \Delta l_t\), we obtain
\begin{align}
  \tilde{p}_j &\geq f_j^\beta\Bigl(\sqrt{1-\Delta l_t^2}\sqrt{1-\delta^{\phantom{2}}_f} - \Delta l_t \sqrt{\delta^{\phantom{2}}_f}\Bigr)^2, \nonumber \\
  \tilde{p}_j &\leq f_j^\beta\Bigl(\sqrt{1-\Delta l_t^2}\sqrt{1-\delta^{\phantom{2}}_f} + \Delta l_t \sqrt{\delta^{\phantom{2}}_f}\Bigr)^2. \label{eq:pjtilde_bound1}
\end{align}
Since $0\leq\Delta l_t<1$, and $0\leq\delta_f<1$ (from Theorem~\ref{theorem:energy_estimation_error}),
we introduce $\theta_1, \theta_2 \in [0,\pi/2)$ defined by
$\sin{\theta_1} = \Delta l_t$, $\sin{\theta_2} = \sqrt{\delta_f}$.
Eq.~\eqref{eq:pjtilde_bound1} can thus be rewritten as
\begin{align}
  \tilde{p}_j/f_j^\beta\geq \cos^2(\theta_1+\theta_2), \nonumber \\
  \tilde{p}_j/f_j^\beta\leq\cos^2(\theta_1-\theta_2).
\end{align}
Using the identity $1-\Delta l^2_t = \cos^2(\theta_1)$ and $\cos^2(\theta_1\pm\theta_2) - \cos^2(\theta_1) = \mp \sin(2\theta_1\pm\theta_2)\sin(\theta_2)$, we have
\begin{align}
\tilde{p}_j/f_j^\beta-(1-\Delta l^2_t)&\geq-\sin(2\theta_1+\theta_2)\sin(\theta_2), \nonumber \\
\tilde{p}_j/f_j^\beta-(1-\Delta l^2_t)&\leq \sin(2\theta_1-\theta_2)\sin(\theta_2).
\end{align}
Let $\theta_2\leq \theta_1$. Then, with $\sin(\theta)<\theta$ for $\theta>0$, we have
\begin{align}
|\tilde{p}_j/f_j^\beta-(1-\Delta l^2_t)|< 3\theta_1 \theta_2. \label{eq:pjtilde_bound2}
\end{align}

Applying Theorem~\ref{theorem:energy_estimation_error} to \(p_j^\beta \equiv p_j^\beta(s_{j+1})\), we find
\[
  \frac{\tilde{p}_j}{f_j^\beta} \;\leq\; \frac{p_j^\beta}{f_j^\beta} \;\leq\; \frac{\tilde{p}_j(1+\delta_p)}{f_j^\beta},
\]
where \(\delta_p = \epsilon_{up}/(\Delta_{j+1}-\epsilon_{up})\), with \(\epsilon_{up}\) the bound from Eq.~\eqref{eq:energy_bound2} evaluated at \(r = \tilde{p}_j/(f_j^\beta-\tilde{p}_j)\).  

Combining this with Eq.~\eqref{eq:pjtilde_bound2}, we obtain
\begin{align*}
  \frac{p_j^\beta}{f_j^\beta} - (1-\Delta l_t^2) &\geq -3 \theta_1 \theta_2, \\
  \frac{p_j^\beta}{f_j^\beta} - (1-\Delta l_t^2) &\leq 3 \theta_1 \theta_2 (1+\delta_p) + \delta_p (1-\Delta l^2_t).
\end{align*}
Thus, we can ensure \(|p_j^\beta/f_j^\beta-(1-\Delta l_t^2)| \leq \epsilon\) from
$3 \theta_1 \theta_2 \leq \epsilon/2, \theta_2\leq\theta_1$, and $(1-\Delta l_t^2) \delta_p \leq \epsilon/3$,
which is satisfied by letting
\begin{align*}
  \delta_f &\leq \min\left(\sin^2\left(\frac{\epsilon}{6\arcsin(\Delta l_t)}\right),\Delta l_t^2\right), \\
  \delta_p &\leq \frac{\epsilon}{3(1-\Delta l_t^2)}.
\end{align*}
From Corollary~\ref{corollary:beta_energy}, the first inequality is satisfied if
\begin{align*}
  \beta > \Theta\!\Bigl(\Delta_j^{-1}\sqrt{\log\gamma_1(f_j,\Delta l_t,\epsilon)}\Bigr), \\
  \gamma_1(f_j,\Delta l_t,\epsilon) = \max\left(\Delta l^{-2}_t, \csc^2\left(\frac{\epsilon}{6\arcsin(\Delta l_t)}\right)\right)\frac{1-f_j}{f_j}.
\end{align*}
If the above condition is satisfied, from Corollary~\ref{corollary:norm_error} and Eq.~\eqref{eq:pjtilde_bound2}, the second inequality is satisfied if
\begin{align*}
  \beta > \Theta\!\Bigl(\Delta_{j+1}^{-1}\sqrt{\log\gamma_2(\Delta l_t,\epsilon)}\Bigr), \\
  \gamma_2(\Delta l_t,\epsilon) = \bigl(1+3(1-\Delta l_t^2)\epsilon^{-1}\bigr)\,\frac{\Delta l_t^2+0.5\epsilon}{1-\Delta l_t^2-0.5\epsilon}.
\end{align*}

Therefore, defining \(\gamma(f_j,\Delta l_t,\epsilon) = \sqrt{\log(\max(\gamma_1,\gamma_2))}\), we conclude that the quantity \(1-\Delta l_t^2\) can be estimated within an error \(\epsilon\) in Algorithm~\ref{alg:css_qzmc}, provided that
\begin{align}
  \beta > \Theta\!\bigl(\Delta^{-1}\,\gamma(1-\eta,\Delta l_t,\epsilon)\bigr), \label{eq:beta_bound_alg}
\end{align}
where \(\eta\in[0,1)\) is such that \(f_j \ge 1-\eta\) for all \(j\).

\subsection{Monte Carlo error for the eigenstate overlap}
Finally, we discuss the Monte Carlo error in the estimation of \(p_j^\beta/f_j^\beta\).  
By the delta method, the variance \(\sigma^2\) of the ratio is
\begin{align}
  \sigma^2 
  = \frac{1}{\bigl(f_j^\beta\bigr)^2}\Bigl(\sigma_{p}^2 + \frac{\bigl(p_j^\beta\bigr)^2}{\bigl(f_j^\beta\bigr)^2}\,\sigma_{f}^2\Bigr),
\end{align}
where \(\sigma_p^2\) and \(\sigma_f^2\) denote the variances of the estimators of \(p_j^\beta\) and \(f_j^\beta\), respectively.  
Since \(\sigma_p^2,\sigma_f^2 \leq 1/N_\nu\) and \(p_j^\beta/f_j^\beta \leq 1\), it follows that
\begin{align}
  \sigma^2 \leq \frac{2}{\bigl(f_j^\beta\bigr)^2 N_\nu}, 
  \qquad 
  \sigma \leq \frac{\sqrt{2}}{f_j^\beta\sqrt{N_\nu}},
\end{align}
where \(N_\nu\) is the number of Monte Carlo samples used in each evaluation of \(f_j^\beta\) and \(p_j^\beta\).  

Therefore, if \(f_j \geq 1-\eta\) for all \(j\), we obtain
\begin{align}
  \sigma \leq \frac{\sqrt{2}}{(1-\eta)\sqrt{N_\nu}}.
\end{align}

\section{Additional data for the numerical experiments}
Here, we provide additional data supplementing the numerical experiments in the main text.
First, Fig.~\ref{fig:S1} presents the calculated schedules corresponding to Fig.~\ref{fig:3}(b) and Fig.~\ref{fig:4}(c).

\begin{figure} 
\centering
\centerline{\includegraphics[width=8.6cm]{./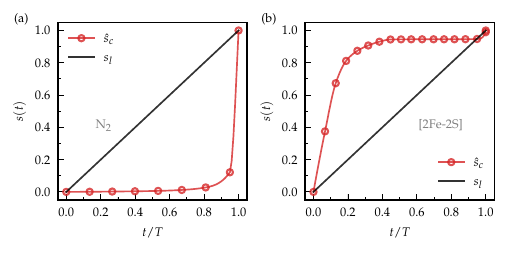}}
\caption{Calculated segmented CGS profiles for
(a) the N$_2$ molecule at a bond length of $R=3.5\,\text{\AA}$ and
(b) the [2Fe-2S] cluster starting from the DFT ground state.}
\label{fig:S1}
\end{figure}
Second, in Fig.~\ref{fig:S2}, we compare the performance of our method with gap-based local optimal schedules
constructed from $ds/d\tau \propto \Delta^p$~\cite{Jansen2007, Roland2002, Albash2018} ($1<p\leq 2$).
The figure demonstrates that our method yields performance similar to these schedules.
\newpage

\begin{figure} 
\centering
\centerline{\includegraphics[width=8.6cm]{./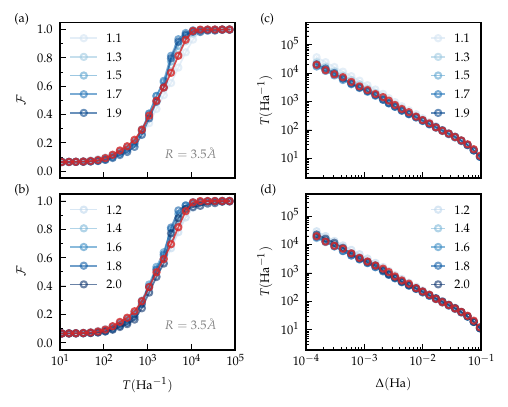}}
\caption{Comparison between the CGS schedule (red) and $\dot{s} \propto \Delta^p$ schedules for the N$_2$ molecule.
(a,b) Fidelity \(\mathcal{F}\) as a function of the evolution time \(T\) at bond length \(R = 3.5\,\text{\AA}\).
(c,d) Evolution time required to reach $\mathcal{F}=0.75$ versus the minimum gap $\Delta$.}
\label{fig:S2}
\end{figure}